\newtheorem{definition}{Definition}[section]
\newtheorem{theorem}[definition]{Theorem}
\newtheorem{corollary}[definition]{Corollary}
\newtheorem{remark}[definition]{Remark}
\numberwithin{equation}{section}
\begin{document}
\title{Estimation of VaR with jump process: application in corn and soybean markets}

\author{ Minglian Lin\footnote{Department of Mathematics, North Dakota State University, Email: minglian.lin@ndsu.edu}, \quad Indranil SenGupta\footnote{Environmental Finance \& Risk Management in the Institute of Environment,  and Department of Mathematics and Statistics, Florida International University, Email: isengupt@fiu.edu}, \quad William Wilson\footnote{CHS Chair in Risk Management and Trading; University Distinguished Professor, Department of Agribusiness and Applied Economics, North Dakota State University, Email: william.wilson@ndsu.edu}}

\date{\today}

\maketitle
\begin{abstract}

\noindent Value at Risk (VaR) is a quantitative measure used to evaluate the risk linked to the potential loss of investment or capital.  Estimation of the VaR entails the quantification of prospective losses in a portfolio of investments, using a certain likelihood, under normal market conditions within a specific time period. The objective of this paper is to construct a model and estimate the VaR for a diversified portfolio consisting of multiple cash commodity positions driven by standard Brownian motions and jump processes. Subsequently, a thorough analytical estimation of the VaR is conducted for the proposed model. The results are then applied to two distinct commodities- corn and soybean- enabling a comprehensive comparison of the VaR values in the presence and absence of jumps.

\end{abstract}
\textsc{Key Words:} Value at Risk (VaR), Portfolio, Jump process, L\'evy measure, Corn and Soybean market.\\



\section{Introduction}
	
Value at Risk (VaR) is a quantitative metric utilized to assess the potential risk associated with the loss of investment or capital. Estimation of VaR quantifies the potential loss of a portfolio of investments, based on a specified probability, under normal market conditions within a certain time frame, such as a single day. VaR is commonly employed by trading firms, financial institutions and regulatory agencies as a means of assessing the requisite amount of assets necessary to mitigate potential losses; and by trading companies as a tool for managing risk. 

The usefulness of VaR arises from the significant fluctuations observed in exchange rates, interest rates, and commodity prices during recent decades, as well as the widespread utilization of derivative instruments for mitigating risks associated with price fluctuations. With the spread of derivatives came increased trading of cash instruments and securities, as well as an expansion of financing options. Consequently, numerous corporations have portfolios comprised of substantial quantities of cash and derivative instruments, which may occasionally exhibit complicated characteristics. Furthermore, the extent of the risks included in the portfolios of organizations is frequently not readily apparent. There is a growing need for a quantitative assessment of market risk at the portfolio level. Many methods exist for calculating VaR, including historical simulation, the delta-normal approach, and Monte Carlo simulation, among others (see \cite{Thomas}). 

The paper \cite{Peter} presents a VaR analysis framework. The analysis in \cite{Peter} deals with, first, to verify the accuracy of a specified VaR measure; and second, to statistically distinguish between two models in order to determine the superior one. Comparative VaRs are computed within an application using historical or option-price-based volatility measures. Risk measurement is, in essence, the process of determining the distribution of risks. In that regard, Conditional value-at-risk (CVaR) and VaR are well-known functions utilized in risk management. In the literature, the dilemma of selecting between VaR and CVaR has been quite prevalent and is influenced by various factors, including disparities in mathematical properties, the reliability of statistical estimation, the ease of optimizing procedures, regulatory acceptability, and more. The paper \cite{Stan} describes the merits and demerits of these risk measures.  It is worth noting that \cite{Stan2} introduces and evaluates a novel methodology for enhancing the performance and mitigating the risk of a portfolio. The primary emphasis is placed on the minimization of CVaR rather than VaR. However, it should be noted that portfolios exhibiting low CVaR inherently exhibit low VaR as well. Other empirical measures commonly used in commodity risk measurement and management include  a simple standard deviation, coefficient of variation, semi variance, in addition to the omega ratio, among others (see \cite{Acerbi1, Artzner1, Follmer1}).

Hybrid methodologies that include AR-GARCH filtering with skewed-t residuals and the extreme value theory-based approach are highly recommended among the repertoire of recognized benchmark predictors for VaR. In recent years, the paper \cite{Peng} presents a new VaR predictor called G-VaR. The G-VaR predictor has been extensively tested on both the NASDAQ Composite Index and S\&P500 Index, revealing its good performance compared to many benchmark VaR predictors.  Similarly, 
in \cite{stoch3} the authors introduce a novel distribution for quantifying the size-of-loss in insurance claims data that exhibits a negative skew. The autoregressive model is suggested as a means of examining insurance claims data and predicting future values of predicted claims. The utilization of VaR estimate is employed as a primary instrument in this investigation.

The paper \cite{stoch1} provides an application of VaR where a stochastic model is implemented to analyze the participation of virtual power plants (VPPs) in futures markets, pool markets, and contracts with withdrawal penalties.  Another application is provided in \cite{stoch2}, where a generalized autoregressive conditional heteroscedasticity (GARCH) based VaR modeling is implemented for some precious metals.  In order to evaluate the fluctuating dynamics in the estimation of VaR, the paper \cite{stoch5}  utilize an integrated methodology that combines dynamic conditional correlation (DCC) and GARCH models. The analysis focuses on the daily stock returns of emerging markets. On the other hand, a corresponding portfolio optimization problem is considered in \cite{LinS, LinS2}. The study examines the portfolio optimization problem in a financial market, specifically focusing on a broad utility function.

There are some recent papers where corn and soybean markets are analyzed with stochastic model. For instance, the paper \cite{Shan} presents a comprehensive model for analyzing the dynamics of market share in the soybean export industry. Additionally, theoretical calculations specifically focused on a particular scenario inside the broader model is conducted. On the other hand,  there exists a significant level of competition between the United States and Brazil in the Chinese soybean import market. The volatility and risk associated with logistical functions and costs have a significant impact on the export competition between the two countries.  The paper \cite{Gwen} examines the methods employed in commodities trading and investigates the impact of logistical functions and costs on shipments from the United States and Brazil to China. The analysis is conducted using an Optimized Monte Carlo Simulation model, which takes into consideration a significant number of random and correlated factors.  The paper \cite{Humyr} presents a comprehensive mathematical model for the analysis of yield data, that are derived from a representative corn field located in the upper midwestern region of the United States. Expressions for statistical moments are derived from the underlying stochastic model.

An analytical estimation of VaR for multi-asset portfolio is solved from the multi-asset stock pricing formula driven by the conventional geometric Brownian motion \cite{FSU}. In this paper, we extend this multivariate VaR model by considering the jumps of multi-asset stock price. As a result, the purpose of the present paper is to develop (specify) a model of VaR for a multi-asset portfolio with multiple cash positions driven by standard Brownian motions and jump processes. After that we analytically estimate the VaR for the proposed model, apply the results to two commodities, and compare the VaR with and without jumps. 

The rest of the paper is structured as follows: In Section \ref{sec2} we improve multi-asset model of asset price using jump process and derive the general formula of the expectation of VaR with jumps for a multi-asset portfolio. In Section \ref{sec3} we provide an application of the proposed model  of VaR to the daily data of cash value of corn and soybean. A brief conclusion is provided in Section \ref{sec4}.

\section{Multi-asset model of asset price with Brownian motion and jump process}
\label{sec2}
	We assume that a portfolio contains $ n $ correlated cash assets.  Let $ S_i(t)$, $ i = 1\dots n $, be the price of assets at time $ t \in [0,T] $ where $ T $ is terminal time. 
	We consider that there are significant unexpected discontinuous changes in asset prices.
	For stochastic modeling, we assume that there are $ m $ independent jump processes in the price of each asset.
	With the notations in \cite{Oksendal_1}, we denote the compensated jump measures by $	\tilde{N}_k(dt,d\zeta_k) \equiv N_k(dt,d\zeta_k)  -  \nu_k(d\zeta_k)dt$, for $k = 1,\dots,m$, where $ N_k(dt,d\zeta_k) $ is the differential jump measure (Poisson random measure) giving the number of jumps through $ dt $ with differential generic jump size $ d\zeta_k \subset \mathbb{R} \setminus \{0\} $, and $ \nu(\cdot) $ is the L\'evy measure defined by $ \nu_k(d\zeta_k) = \mathbb{E}\big(N(1,d\zeta_k)\big) $ where $ \mathbb{E} $ is expectation function.
	In addition, we denote $ \boldsymbol{\zeta} = (\zeta_1, \dots, \zeta_m)^\top $.

\subsection{Asset price dynamics}	
	We note the geometric Brownian motion implies that the asset prices are non-negative, and the percentage changes in asset price are random and independent for all increments in time. 
	Hence, we define the price of assets by the following generalized geometric processes
	\begin{align}\label{geo}
		\frac{dS_i(t)}{S_i(t)} = 
		\mu_i dt +  \sum_{j=1}^n a_{ij} dW_j(t) + \sum_{k=1}^m \int_\mathbb{R} \gamma_{ik}(t,\boldsymbol{\zeta})\tilde{N}_k(dt,d\zeta_k),
		\quad S_i(0) >0, 
	\end{align}
	where $i = 1, \dots, n$,
	$ \frac{dS_i(t)}{S_i(t)} $ is instantaneous return,
	$ \mu_i $ is constant expectation of return after removing all the jumps from asset prices, 
	$ W_j(t) $ is standard Brownian motion,
	and $ \gamma_{ik}(t,\boldsymbol{\zeta}) $ is predictable process.
	Meanwhile, $ (a_{ij})_{1\leq i,j \leq n} = \textbf{A} $ is the lower triangular Cholesky decomposition of the constant covariance matrix of return after removing all the jumps from asset prices, i.e. $  \boldsymbol{\Sigma} = \textbf{A}\textbf{A}^\top  $, for computational efficiency  (see \cite{FSU, Glasserman}).  As the estimation of VaR is considered for the short-term future, we assume constant volatility $ a_{ij} $, $ 1\leq i,j \leq n $.
	
	If $ \gamma_{ik} > -1 $, then $ S_i(t) $ can never jump to $ 0 $ or a negative value (see Example 2.5, \cite{Oksendal_1}). 
	To find the solution of stochastic differential equation \eqref{geo}, for $ i = 1,\dots, n $, we firstly apply It\^o formula of  jump process (Theorem 1.14 in \cite{Oksendal_2}) to $ \ln S_i(t) $ and obtain
	\begin{align}
		& d \ln S_i(t)  
		\nonumber \\
		&= \frac{S_i(t)}{S_i(t)} \bigg[ \mu_i dt + \sum_{j=1}^n a_{ij} dW_j(t) \bigg]
		- \frac{1}{2} \frac{S_i(t)^2}{S_i(t)^2} \sum_{j=1}^n a_{ij}^2 dt \nonumber \\
		& \quad 
		+ \sum_{k=1}^m \int_\mathbb{R} \bigg[ \ln \big(S_i(t) + \gamma_{ik}(t,\boldsymbol{\zeta})S_i(t)\big) - \ln S_i(t) -  \frac{S_i(t)}{S_i(t)} \gamma_{ik}(t,\boldsymbol{\zeta}) \bigg] \nu_k(d\zeta_k) dt \nonumber \\
		& \quad
		+ \sum_{k=1}^m \int_\mathbb{R} \bigg[ \ln \big(S_i(t) + \gamma_{ik}(t,\boldsymbol{\zeta})S_i(t)\big) - \ln S_i(t) \bigg] \tilde{N}_k(dt,d\zeta_k) \nonumber \\
		&=  \Big[\mu_i-\frac{\sigma_i^2}{2}\Big]dt + \sum_{j=1}^n a_{ij} W_j(t)
		+ \sum_{k=1}^m \int_\mathbb{R} \Big[ \ln \big(1 + \gamma_{ik}(t,\boldsymbol{\zeta})\big) - \gamma_{ik}(t,\boldsymbol{\zeta}) \Big] \nu_k(d\zeta_k) dt \label{Cholesky} \nonumber \\ 
		& \quad
		+ \sum_{k=1}^m \int_\mathbb{R} \ln \big(1 + \gamma_{ik}(t,\boldsymbol{\zeta})\big)  \tilde{N}_k(dt,d\zeta_k) \\
		&=  \Big[\mu_i-\frac{\sigma_i^2}{2}\Big]dt + \sum_{j=1}^n a_{ij} W_j(t)
		+ \sum_{k=1}^m \int_\mathbb{R} - \gamma_{ik}(t,\boldsymbol{\zeta}) \nu_k(d\zeta_k) dt  + \sum_{k=1}^m \int_\mathbb{R} \ln \big(1 + \gamma_{ik}(t,\boldsymbol{\zeta})\big) N_k(dt,d\zeta_k). \nonumber
	\end{align}
	Here, equation \eqref{Cholesky} makes use of the fact $ \sum_{j=1}^n a_{ij}^2 = \sigma_i^2 $ by definition of Cholesky decomposition, where the scalar $ \sigma_i $ is constant standard deviation of return after removing all the jumps from the $i$-th asset price.
	After that, taking integral to both sides of above equation with $ W_j(0) = 0 $, for all $j$, gives the following geometric L\'evy process
	\begin{align}\label{geoL}
		S_i(t)
		= S_i(0) \exp \bigg(
		& \Big[\mu_i-\frac{\sigma_i^2}{2}\Big]t + \sum_{j=1}^n a_{ij} W_j(t)
		+ \sum_{k=1}^m \int_0^t \int_\mathbb{R} - \gamma_{ik}(s,\boldsymbol{\zeta}) \nu_k(d\zeta_k) ds \nonumber \\
		& + \sum_{k=1}^m \int_0^t \int_\mathbb{R} \ln \big(1 + \gamma_{ik}(s,\boldsymbol{\zeta})\big) N_k(ds,d\zeta_k)
		\bigg). 
	\end{align}

	\subsection{Estimation of VaR with jump process}
	
	For a random variable 
	\begin{equation}
	\label{dn}
	 Y = \exp(\tilde{\mu}+\tilde{\sigma} Z),
	 \end{equation}
	 where $ Z \sim \mathcal{N}(0, 1) $, the Value-at-Risk (VaR) is defined by 
	 \begin{equation}	 
	 \label{dn2}
	  \text{VaR}_\alpha(Y) = \sup \{x: \mathbb{P}(Y \leq x) = 1-\alpha\},
	  \end{equation}
	   where $ \mathbb{P} $ is probability measure, at confidence level $ \alpha $. A typical value of $\alpha$ is close to (but not greater than) $ 1 $. It is usually $ 0.95 $. Therefore, the significance level is given in the form $ 1-\alpha $ in this study.
	It follows that 
	\begin{align*}
		\mathbb{P}(Y\leq x) = \mathbb{P}\big(\exp(\tilde{\mu}+\tilde{\sigma} Z)\leq x\big) = \mathbb{P}\Big(Z \leq\frac{\ln x - \tilde{\mu}}{\tilde{\sigma}}\Big)
		:= \Phi\Big(\frac{\ln x - \tilde{\mu}}{\tilde{\sigma}}\Big) = 1-\alpha,
	\end{align*}
	where $\Phi(\cdot)$ is the cdf for the standard normal distribution. For \emph{fixed} $ t $, we define $ W_j(t) = \sqrt{t}Z_j $, $ j = 1, \dots, n $, where $ \{Z_j\}_{j=1}^n \sim \mathcal{N}(0, 1) $. 
	For a fixed $t$, by substituting it in the formula \eqref{geoL}, we note
	\begin{align*}
		\sum_{j=1}^n a_{ij} W_j(t) = \sqrt{t} \sum_{j=1}^n a_{ij} Z_j = \sqrt{t} \boldsymbol{a}_i \boldsymbol{Z},
	\end{align*}
	where $ \boldsymbol{a}_i = (a_{i1}, \dots, a_{in}) $ and $ \boldsymbol{Z} = (Z_1, \dots , Z_n)^\top $.
	We consider that a jump of asset price always changes the mean of asset price significantly, for the follow-up days.
	We will use the random variable $Y$ in \eqref{dn} and the corresponding VaR in \eqref{dn2} as motivation. Comparing  \eqref{geoL} with \eqref{dn} we obtain for $ i = 1\dots n $, 
	
	\begin{align*}
		 \tilde{\mu}_i {} = \
		& \Big[\mu_i-\frac{\sigma_i^2}{2}\Big]t 
		+ \sum_{k=1}^m \int_0^t \int_\mathbb{R} - \gamma_{ik}(s,\boldsymbol{\zeta}) \nu_k(d\zeta_k) ds
		+ \sum_{k=1}^m \int_0^t \int_\mathbb{R} \ln \big(1 + \gamma_{ik}(s,\boldsymbol{\zeta})\big) N_k(ds,d\zeta_k), \\
		\tilde{\sigma}_i{} = \
		& \sqrt{t} \boldsymbol{a}_i.
	\end{align*}
	In the case of $ n $ assets, let $ \boldsymbol{w} = (w_1, \dots, w_n)^\top $ be the vector of weight of each asset such that $ \sum_{i=1}^{n} w_i =1 $.  Including the weights for multiple assets, we define the scalars $ \tilde{\mu} $ and $ \tilde{\sigma} $ with jump process as:
	\begin{align*}
		\tilde{\mu} {} = \
		& \boldsymbol{w}^\top \bigg[ \Big[\boldsymbol{\mu} - \frac{1}{2} \boldsymbol{\sigma}\circ\boldsymbol{\sigma}\Big]t 
		+ \sum_{k=1}^m \int_0^t \int_\mathbb{R} - \boldsymbol{\gamma}_{k}(s,\boldsymbol{\zeta}) \nu_k(d\zeta_k) ds
		+ \sum_{k=1}^m \int_0^t \int_\mathbb{R} \boldsymbol{f}_k(s,\boldsymbol{\zeta}) N_k(ds,d\zeta_k)
		\bigg], \\
		\tilde{\sigma} {} = \
		& \big( \boldsymbol{w}^\top \big(\sqrt{t} \textbf{A} ( \sqrt{t} \textbf{A})^\top \big) \boldsymbol{w} \big)^{\frac{1}{2}} 
		= \sqrt{t}\big( \boldsymbol{w}^\top \boldsymbol{\Sigma} \boldsymbol{w} \big)^{\frac{1}{2}} .
	\end{align*}
	where $ \boldsymbol{\mu} = (\mu_1, \dots, \mu_n)^\top $, $ \boldsymbol{\sigma} = (\sigma_1, \dots, \sigma_n)^\top $, 
	$ \boldsymbol{f}_k(s,\boldsymbol{\zeta}) = \big(\ln(1 + \gamma_{1k}(s,\boldsymbol{\zeta})), \dots, \ln(1 + \gamma_{nk}(s,\boldsymbol{\zeta}))\big)^\top $,
	$ \boldsymbol{\gamma}_{k}(s,\boldsymbol{\zeta}) = \big( \gamma_{1k}(s,\boldsymbol{\zeta}),\dots, \gamma_{nk}(s,\boldsymbol{\zeta}) \big)^\top $,
	$ \circ $ is Hadamard product, and
	$ \textbf{A} = (\boldsymbol{a}_1, \dots, \boldsymbol{a}_n)^\top $ is the lower triangular Cholesky decomposition of the constant covariance matrix of return, i.e. $  \boldsymbol{\Sigma} = \textbf{A}\textbf{A}^\top  $.
	
We define the VaR of a multi-asset portfolio with multiple cash positions as:
	\begin{align*}
		\text{VaR}_\alpha(t) 
		{}= \
		& \exp \Bigg(
		\boldsymbol{w}^\top \bigg[ \Big[\boldsymbol{\mu} - \frac{1}{2} \boldsymbol{\sigma}\circ\boldsymbol{\sigma}\Big]t 
		+ \sum_{k=1}^m \int_0^t \int_\mathbb{R} - \boldsymbol{\gamma}_{k}(s,\boldsymbol{\zeta}) \nu_k(d\zeta_k) ds \nonumber \\
		& \quad \quad \ \ 
		+ \sum_{k=1}^m \int_0^t \int_\mathbb{R} \boldsymbol{f}_k(s,\boldsymbol{\zeta}) N_k(ds,d\zeta_k)
		\bigg]
		+ \sqrt{t}\big( \boldsymbol{w}^\top \boldsymbol{\Sigma} \boldsymbol{w} \big)^{\frac{1}{2}} \Phi^{-1}(1-\alpha)
		\Bigg).
	\end{align*}
It follows that, we can define:
\begin{definition}
\label{def1}
	\begin{align*}
		\text{VaR}_\alpha(t) 
		= \
		& \exp \bigg( \boldsymbol{w}^\top \Big[\boldsymbol{\mu} - \frac{1}{2} \boldsymbol{\sigma}\circ\boldsymbol{\sigma}\Big]t
		\bigg) 
		\prod_{k=1}^m \exp \bigg( \int_0^t \int_\mathbb{R} - \boldsymbol{w}^\top \boldsymbol{\gamma}_{k}(s,\boldsymbol{\zeta}) \nu_k(d\zeta_k) ds \bigg) \\
		&
		\prod_{k=1}^m \exp \bigg( \int_0^t \int_\mathbb{R} \boldsymbol{w}^\top \boldsymbol{f}_k(s,\boldsymbol{\zeta}) N_k(dt,d\zeta_k)
		\bigg)
		\exp \bigg(\sqrt{t}\big( \boldsymbol{w}^\top \boldsymbol{\Sigma} \boldsymbol{w} \big)^{\frac{1}{2}} \Phi^{-1}(1-\alpha)
		\bigg).
	\end{align*}
\end{definition}

	In Definition \ref{def1}, note that $ \exp \big( \int_0^t \int_\mathbb{R} \boldsymbol{w}^\top \boldsymbol{f}_k(s,\boldsymbol{\zeta}) N_k(ds,d\zeta_k) \big) $, $ k = 1, \dots, m $, are independent stochastic terms while the remaining terms are deterministic.
	This implies that, for \emph{fixed} $ t $, $ \text{VaR}_\alpha(t) $ is stochastic with respect to Poisson random measure $ N $.
	However, for empirical application we would like to fins a concrete value. Consequently, we find  the expected value of $ \text{VaR}_\alpha $ as: 
	\begin{align}
	\label{dn3}
		\mathbb{E} (\text{VaR}_\alpha) (t)
		= \
		& \exp \bigg( \boldsymbol{w}^\top \Big[\boldsymbol{\mu} - \frac{1}{2} \boldsymbol{\sigma}\circ\boldsymbol{\sigma}\Big]t
		\bigg)
		\prod_{k=1}^m \exp \bigg( \int_0^t \int_\mathbb{R} - \boldsymbol{w}^\top \boldsymbol{\gamma}_{k}(s,\boldsymbol{\zeta}) \nu_k(d\zeta_k) ds
		\bigg) \nonumber \\
		&
		\prod_{k=1}^m \mathbb{E} \bigg( \exp \bigg( \int_0^t \int_\mathbb{R} \boldsymbol{w}^\top \boldsymbol{f}_k(s,\boldsymbol{\zeta})  N_k(ds,d\zeta_k)
		\bigg) \bigg)
		\exp \bigg( \sqrt{t}\big( \boldsymbol{w}^\top \boldsymbol{\Sigma} \boldsymbol{w} \big)^{\frac{1}{2}} \Phi^{-1}(1-\alpha)
		\bigg).
	\end{align}

\begin{theorem} \label{THM}
		In space $ \big([0, \infty)\times (\mathbb{R}\setminus\{0\}), \mathcal{A} \times \mathcal{B}, N(t,\zeta)\big) $ where $ N $ is jump measure,
		$ \forall f \in L^1 $,
		\begin{align*}
			\mathbb{E} \bigg( \exp \bigg( \int_0^t \int_\mathbb{R} f(s,\zeta)  N(ds,d\zeta) \bigg) \bigg)
			= \exp\bigg( \int_0^t \int_\mathbb{R} \Big[\exp\big(f(s,\zeta)\big)-1\Big] \nu(d\zeta)ds \bigg).
		\end{align*}
	\end{theorem}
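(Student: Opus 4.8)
\noindent The statement is the classical exponential (Campbell) formula for Poisson random measures, and the plan is to establish it first for elementary integrands and then to pass to a limit. Write $m:=\mathrm{Leb}\otimes\nu$ for the intensity measure on $[0,t]\times(\mathbb{R}\setminus\{0\})$. Since $f\in L^1$ one has
\begin{align*}
\mathbb{E}\bigg(\int_0^t\int_\mathbb{R}|f(s,\zeta)|\,N(ds,d\zeta)\bigg)=\int_0^t\int_\mathbb{R}|f(s,\zeta)|\,\nu(d\zeta)\,ds<\infty,
\end{align*}
so the Poisson integral $X:=\int_0^t\int_\mathbb{R}f(s,\zeta)\,N(ds,d\zeta)$ is almost surely finite and both sides of the identity are well defined with values in $[0,\infty]$. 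The first step would be to take a simple integrand $f(s,\zeta)=\sum_{l=1}^{L}c_l\,\mathbbm{1}_{A_l}(s,\zeta)$, with the sets $A_l\subset[0,t]\times(\mathbb{R}\setminus\{0\})$ pairwise disjoint and $m(A_l)=:\lambda_l<\infty$. Then $X=\sum_{l=1}^{L}c_l\,N(A_l)$, the $N(A_l)$ are independent with $N(A_l)\sim\mathrm{Poisson}(\lambda_l)$, and the Poisson moment generating function $\mathbb{E}(e^{c\,N(A)})=\exp\big(m(A)(e^{c}-1)\big)$ together with independence gives
\begin{align*}
\mathbb{E}\big(e^{X}\big)=\prod_{l=1}^{L}\exp\big(\lambda_l(e^{c_l}-1)\big)=\exp\bigg(\int_0^t\int_\mathbb{R}\big(e^{f(s,\zeta)}-1\big)\,\nu(d\zeta)\,ds\bigg),
\end{align*}
where the last equality uses $e^{f}-1=\sum_{l}(e^{c_l}-1)\mathbbm{1}_{A_l}$ on the disjoint $A_l$. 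This is the claimed identity for simple $f$.

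\noindent For a general $f\in L^1$ the plan is to decompose $f=f^{+}-f^{-}$ and to use that $\{f>0\}$ and $\{f<0\}$ are disjoint, so the restrictions of $N$ to them are independent and $\mathbb{E}(e^{X})=\mathbb{E}(e^{\int f^{+}dN})\,\mathbb{E}(e^{-\int f^{-}dN})$. For the first factor I would pick simple functions $0\le g_j\uparrow f^{+}$, each supported on a set of finite $m$-measure (possible because $f^{+}\in L^1$ and $\nu$ is $\sigma$-finite); then $\int g_j\,dN\uparrow\int f^{+}\,dN$ pathwise and $e^{g_j}-1\uparrow e^{f^{+}}-1$, so monotone convergence on both sides of the simple-integrand identity upgrades it to $f^{+}$ (both sides being $+\infty$ simultaneously when $\int_0^t\int_\mathbb{R}(e^{f^{+}}-1)\,\nu(d\zeta)\,ds=\infty$). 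For the second factor I would pick simple $0\le h_j\uparrow f^{-}$; then $e^{-\int h_j dN}\to e^{-\int f^{-}dN}$ with all terms in $[0,1]$, so dominated convergence with majorant $1$ handles the expectations, and since $|e^{-h_j}-1|\le h_j\le f^{-}\in L^1(m)$ dominated convergence also passes the limit in the deterministic integral. Multiplying the two resulting identities and recombining the exponents (the integrand $e^{f}-1$ vanishes on $\{f=0\}$) yields $\exp\big(\int_0^t\int_\mathbb{R}(e^{f(s,\zeta)}-1)\,\nu(d\zeta)\,ds\big)$, which completes the proof.

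\noindent The first step is routine; the step I expect to be the main obstacle is the limiting argument. Monotone convergence is available only for the positive part, and this is what forces the decomposition along the disjoint sets $\{f>0\}$ and $\{f<0\}$ so that independence can be invoked and a dominated-convergence argument carried out for the negative part; the $m$-integrable majorant this requires on the right-hand side is precisely $f^{-}$, supplied by $f\in L^1$ via the elementary inequality $|e^{-x}-1|\le x$ for $x\ge0$, and a little extra care is needed for the degenerate case in which $\int_0^t\int_\mathbb{R}(e^{f}-1)\,\nu(d\zeta)\,ds$ diverges. As an alternative route (valid under the additional integrability that makes the compensated integral meaningful), one could apply the It\^o formula for jump processes, Theorem 1.14 in \cite{Oksendal_2}, to $Z(t):=\exp\big(\int_0^t\int_\mathbb{R}f\,dN-\int_0^t\int_\mathbb{R}(e^{f}-1)\,\nu(d\zeta)\,ds\big)$, verify that $dZ(t)=Z(t^{-})\int_\mathbb{R}(e^{f(t,\zeta)}-1)\,\tilde{N}(dt,d\zeta)$, and conclude $\mathbb{E}(Z(t))=Z(0)=1$ once this local martingale is shown to be a genuine martingale — which again reduces to the same $L^1$-type estimate.
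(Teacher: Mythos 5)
Your proposal is correct, and its first half coincides with the paper's own proof: both establish the identity for simple integrands by writing the Poisson integral as $\sum_l c_l N(A_l)$, invoking independence of the Poisson random measure over disjoint sets, and evaluating the Poisson moment generating function $\mathbb{E}(e^{cN(A)})=\exp\big(m(A)(e^c-1)\big)$. Where you genuinely diverge is the limit passage from simple functions to general $f\in L^1$. The paper extracts an a.e.-convergent subsequence from an $L^1$-approximating sequence, asserts it is ``dominated by a simple function from $(g_n)$,'' and applies dominated convergence; this is the weakest point of the paper's argument, since such a dominating simple function need not exist, and even an $L^1(m)$ dominant $h$ would not yield an integrable majorant for $e^{\int g_n\,dN}$ (because $\int(e^h-1)\,dm$ can be infinite when $\int h\,dm$ is finite). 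Your route avoids this entirely: splitting $f=f^+-f^-$ along the disjoint sets $\{f>0\}$ and $\{f<0\}$ restores independence of the two Poisson integrals, monotone convergence handles the positive part on both sides simultaneously (including the degenerate case where both sides are $+\infty$), and bounded convergence with the majorant $1$ together with the bound $|e^{-x}-1|\le x$ handles the negative part. What your approach buys is a proof that is valid for every $f\in L^1$ exactly as the theorem is stated, with the equality read in $[0,\infty]$; what the paper's approach buys is brevity, at the cost of a convergence step that does not withstand scrutiny without additional hypotheses. Your closing remark about the It\^o-formula alternative is a reasonable aside but, as you note, requires extra integrability to turn the local martingale into a true martingale, so the measure-theoretic route is the right primary argument.
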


The proof is provided in Appendix \ref{app1koy}. Implementing \eqref{dn3} and Theorem \ref{THM}, we obtain the estimate of VaR as the following theorem:
	
\begin{theorem}
When $\text{VaR}_\alpha(t)$ is defined as in the Definition \ref{def1}, 
	\begin{align} \label{VaR}
		\mathbb{E} (\text{VaR}_\alpha) (t)= & \exp \bigg( 
		\boldsymbol{w}^\top \Big[\boldsymbol{\mu} - \frac{1}{2} \boldsymbol{\sigma}\circ\boldsymbol{\sigma}\Big]t
		+ \sqrt{t}\big( \boldsymbol{w}^\top \boldsymbol{\Sigma} \boldsymbol{w} \big)^{\frac{1}{2}} \Phi^{-1}(1-\alpha)
		\bigg) \nonumber \\
		&
		\prod_{k=1}^m \exp\bigg( \int_0^t \int_\mathbb{R} \Big[\exp\big(\boldsymbol{w}^\top \boldsymbol{f}_k(s,\boldsymbol{\zeta})\big) - 1
		- \boldsymbol{w}^\top \boldsymbol{\gamma}_{k}(s,\boldsymbol{\zeta})\Big] \nu_k(d\zeta_k)ds \bigg).
	\end{align}
\end{theorem}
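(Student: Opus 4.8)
The plan is to start from the representation \eqref{dn3} of $\mathbb{E}(\text{VaR}_\alpha)(t)$, which has already been derived in the excerpt. That identity does the main structural work for us: it isolates the only stochastic ingredients — the factors $\exp\big(\int_0^t\int_\mathbb{R}\boldsymbol{w}^\top\boldsymbol{f}_k(s,\boldsymbol{\zeta})\,N_k(ds,d\zeta_k)\big)$ for $k=1,\dots,m$ — from the deterministic drift and Gaussian-quantile prefactors, and it uses the independence of the Poisson random measures $N_1,\dots,N_m$ to write the expectation of the product of these factors as the product of their expectations. So the task reduces to computing $\mathbb{E}\big(\exp\big(\int_0^t\int_\mathbb{R}\boldsymbol{w}^\top\boldsymbol{f}_k(s,\boldsymbol{\zeta})\,N_k(ds,d\zeta_k)\big)\big)$ for each fixed $k$ and then reassembling the formula.

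The second step is to apply Theorem \ref{THM} to each of these expectations with the choice $f(s,\zeta):=\boldsymbol{w}^\top\boldsymbol{f}_k(s,\boldsymbol{\zeta})=\sum_{i=1}^n w_i\ln\big(1+\gamma_{ik}(s,\boldsymbol{\zeta})\big)$ and $\nu:=\nu_k$. This yields $\mathbb{E}\big(\exp\big(\int_0^t\int_\mathbb{R}\boldsymbol{w}^\top\boldsymbol{f}_k\,N_k(ds,d\zeta_k)\big)\big)=\exp\big(\int_0^t\int_\mathbb{R}[\exp(\boldsymbol{w}^\top\boldsymbol{f}_k(s,\boldsymbol{\zeta}))-1]\,\nu_k(d\zeta_k)\,ds\big)$. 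Substituting this back into \eqref{dn3}, the $k$-th term of the product becomes a product of two exponentials — the deterministic $\exp\big(\int_0^t\int_\mathbb{R}-\boldsymbol{w}^\top\boldsymbol{\gamma}_k(s,\boldsymbol{\zeta})\,\nu_k(d\zeta_k)\,ds\big)$ and the one just obtained — which I would merge into $\exp\big(\int_0^t\int_\mathbb{R}[\exp(\boldsymbol{w}^\top\boldsymbol{f}_k(s,\boldsymbol{\zeta}))-1-\boldsymbol{w}^\top\boldsymbol{\gamma}_k(s,\boldsymbol{\zeta})]\,\nu_k(d\zeta_k)\,ds\big)$; this is exactly the $k$-th factor of \eqref{VaR}. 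Collecting the two surviving deterministic prefactors, $\exp\big(\boldsymbol{w}^\top[\boldsymbol{\mu}-\tfrac12\boldsymbol{\sigma}\circ\boldsymbol{\sigma}]t\big)$ and $\exp\big(\sqrt{t}(\boldsymbol{w}^\top\boldsymbol{\Sigma}\boldsymbol{w})^{1/2}\Phi^{-1}(1-\alpha)\big)$, into a single exponential completes the derivation.

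The one genuinely technical point — and the step I would state most carefully — is verifying the hypothesis of Theorem \ref{THM}, namely that $\boldsymbol{w}^\top\boldsymbol{f}_k$ lies in $L^1\big([0,t]\times(\mathbb{R}\setminus\{0\}),\,ds\otimes\nu_k\big)$ (and implicitly that the right-hand side is finite, i.e. that $\exp(\boldsymbol{w}^\top\boldsymbol{f}_k)-1$ is $ds\otimes\nu_k$-integrable, so that the expectations in \eqref{dn3} make sense). The standing condition $\gamma_{ik}>-1$ from Section \ref{sec2} already guarantees that each $\ln(1+\gamma_{ik}(s,\boldsymbol{\zeta}))$ is real-valued and that $S_i(t)$ never reaches a nonpositive value, but the integrability near $\zeta_k=0$ and the control of the L\'evy tails are additional regularity requirements on $\gamma_{ik}$ and $\nu_k$ that must be invoked here; in the corn/soybean setting of Section \ref{sec3}, where the jump sizes are bounded and the $\nu_k$ are finite measures, these conditions hold automatically. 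Everything else is a routine regrouping of exponentials and presents no obstacle.
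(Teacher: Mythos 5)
Your proposal matches the paper's own derivation: the paper likewise obtains \eqref{VaR} by applying Theorem \ref{THM} with $f=\boldsymbol{w}^\top\boldsymbol{f}_k$ and $\nu=\nu_k$ to each stochastic factor in \eqref{dn3} and then regrouping the exponentials. Your added remark on verifying the $L^1$ hypothesis is a point the paper leaves implicit, but the argument is essentially identical.
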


We now observe the following interesting result. 
	\begin{theorem} \label{Thm}
		The jumps of asset price do not explicitly exist in the expectation of value-at-risk, in either of the following two cases:
		\begin{enumerate}
			\item the portfolio contains only one asset;
			\item for the $ k $-th jump, $ k = 1, \dots, m $, the returns of all the $ n $ assets in a portfolio follow the same predictable process $ \gamma_{k} $, i.e. $ \gamma_{k} = \gamma_{1k} = \gamma_{2k} = \dots = \gamma_{nk} $.
		\end{enumerate}
	\end{theorem}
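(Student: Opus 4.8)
The plan is to start from the closed-form expression \eqref{VaR} for $\mathbb{E}(\text{VaR}_\alpha)(t)$ and to show that, under either hypothesis, every factor in the product $\prod_{k=1}^m \exp\big( \int_0^t \int_\mathbb{R} [\exp(\boldsymbol{w}^\top \boldsymbol{f}_k(s,\boldsymbol{\zeta})) - 1 - \boldsymbol{w}^\top \boldsymbol{\gamma}_k(s,\boldsymbol{\zeta})]\,\nu_k(d\zeta_k)\,ds \big)$ collapses to $1$, so that only the deterministic factor $\exp\big( \boldsymbol{w}^\top [\boldsymbol{\mu} - \tfrac12 \boldsymbol{\sigma}\circ\boldsymbol{\sigma}]t + \sqrt{t}(\boldsymbol{w}^\top \boldsymbol{\Sigma} \boldsymbol{w})^{1/2}\Phi^{-1}(1-\alpha)\big)$ survives, which is exactly the expectation of value-at-risk obtained from \eqref{geo} with all jump terms removed. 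The key observation in both cases is that the bracketed integrand is identically zero, and for this it suffices to verify the pointwise identity $\boldsymbol{w}^\top \boldsymbol{f}_k(s,\boldsymbol{\zeta}) = \ln\big(1 + \boldsymbol{w}^\top \boldsymbol{\gamma}_k(s,\boldsymbol{\zeta})\big)$, since then $\exp(\boldsymbol{w}^\top \boldsymbol{f}_k) - 1 - \boldsymbol{w}^\top \boldsymbol{\gamma}_k = (1 + \boldsymbol{w}^\top \boldsymbol{\gamma}_k) - 1 - \boldsymbol{w}^\top \boldsymbol{\gamma}_k = 0$.

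For case (1), $n = 1$ forces $\boldsymbol{w} = (1)$, whence $\boldsymbol{w}^\top \boldsymbol{f}_k = \ln(1+\gamma_{1k}(s,\boldsymbol{\zeta}))$ and $\boldsymbol{w}^\top \boldsymbol{\gamma}_k = \gamma_{1k}(s,\boldsymbol{\zeta})$, so the required identity is immediate. For case (2), I would write $\boldsymbol{1} = (1,\dots,1)^\top$ and use $\gamma_{1k} = \dots = \gamma_{nk} = \gamma_k$ to get $\boldsymbol{f}_k(s,\boldsymbol{\zeta}) = \ln(1+\gamma_k(s,\boldsymbol{\zeta}))\,\boldsymbol{1}$ and $\boldsymbol{\gamma}_k(s,\boldsymbol{\zeta}) = \gamma_k(s,\boldsymbol{\zeta})\,\boldsymbol{1}$; then $\boldsymbol{w}^\top \boldsymbol{f}_k = \ln(1+\gamma_k)\,(\boldsymbol{w}^\top \boldsymbol{1})$ and $\boldsymbol{w}^\top \boldsymbol{\gamma}_k = \gamma_k\,(\boldsymbol{w}^\top \boldsymbol{1})$, and since the weights satisfy $\boldsymbol{w}^\top \boldsymbol{1} = \sum_{i=1}^n w_i = 1$ these equal $\ln(1+\gamma_k)$ and $\gamma_k$ respectively, giving again $\boldsymbol{w}^\top \boldsymbol{f}_k = \ln(1+\boldsymbol{w}^\top \boldsymbol{\gamma}_k)$. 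In both situations every integrand in \eqref{VaR} vanishes, each exponential factor is $1$, and the jump intensities drop out entirely.

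There is no genuinely hard step, but a few points deserve care. The normalization $\boldsymbol{w}^\top \boldsymbol{1} = 1$ is precisely what makes the cancellation go through in case (2): without it, the weighted sum of log-returns would not coincide with $\ln$ of one plus the weighted sum of the $\gamma$'s, and the jump terms would not vanish. One should also keep in mind that \eqref{VaR} itself rests on Theorem \ref{THM}, which requires $\boldsymbol{w}^\top \boldsymbol{f}_k \in L^1([0,t]\times\mathbb{R},\,\nu_k(d\zeta_k)\,ds)$; this is guaranteed by the standing assumptions that each $\gamma_{ik}$ is predictable with $\gamma_{ik} > -1$ together with the integrability implicit in the dynamics \eqref{geo}. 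Finally, it is worth emphasizing in the write-up that the surviving expression does not depend on $m$ or on any $\nu_k$ at all, which is the precise sense in which the jumps ``do not explicitly exist'' in $\mathbb{E}(\text{VaR}_\alpha)(t)$: they have been algebraically cancelled, not merely assumed absent.
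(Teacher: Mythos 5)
Your proposal is correct and follows essentially the same route as the paper: in both cases the jump integrand $\exp(\boldsymbol{w}^\top \boldsymbol{f}_k) - 1 - \boldsymbol{w}^\top \boldsymbol{\gamma}_k$ is shown to vanish pointwise, with the normalization $\sum_{i=1}^n w_i = 1$ doing the work in Case 2 (the paper phrases this as $\prod_i(1+\gamma_k)^{w_i} = (1+\gamma_k)^{\sum_i w_i}$, which is the same computation as your $\boldsymbol{w}^\top\boldsymbol{f}_k = \ln(1+\gamma_k)\,\boldsymbol{w}^\top\boldsymbol{1}$). Your added remarks on the $L^1$ hypothesis of Theorem \ref{THM} and on the precise sense of ``do not explicitly exist'' are sensible but do not change the argument.
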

	\begin{proof}
		In the single asset case, we start from formula \eqref{VaR} which is the general formula of $ \mathbb{E} \big(\text{VaR}_\alpha(t)\big) $. By subsituting $ f(s,\zeta) = \ln(1 + \gamma(s,\zeta)) $, we have: $ \exp\big(f(s,\zeta)\big) - 1 - \gamma(s,\zeta) = 0 $. Hence
		\begin{align*}
			\mathbb{E} (\text{VaR}_\alpha)(t)
			{} = \ 
			& \exp \bigg( \Big[\mu - \frac{\sigma^2}{2} \Big]t + \sigma \Phi^{-1}(1-\alpha) \sqrt{t} \bigg)
			\exp\bigg( \int_0^t \int_\mathbb{R} 0\ \nu(d\zeta)ds \bigg) \nonumber \\
			{} = \
			& \exp \bigg( \Big[\mu - \frac{\sigma^2}{2} \Big]t + \sigma \Phi^{-1}(1-\alpha) \sqrt{t} \bigg).
		\end{align*}
		
		In the $ 2 $nd case, we start from the integrand of the jump terms of formula \eqref{VaR}.
		For every $ k = 1, \dots m $, subsituting $ \boldsymbol{f}_k(s,\boldsymbol{\zeta}) = \big(\ln(1 + \gamma_{1k}(s,\boldsymbol{\zeta})), \dots, \ln(1 + \gamma_{nk}(s,\boldsymbol{\zeta}))\big)^\top $ gives
		\begin{align*}
			\exp\big(\boldsymbol{w}^\top \boldsymbol{f}_k(s,\boldsymbol{\zeta})\big) - 1
			- \boldsymbol{w}^\top \boldsymbol{\gamma}_{k}(s,\boldsymbol{\zeta})
			& = \exp\bigg( \sum_{i=1}^n \ln\big(1 + \gamma_{ik}(s,\boldsymbol{\zeta})\big)^{w_i} \bigg) - 1
			- \sum_{i=1}^n w_i \gamma_{ik}(s,\boldsymbol{\zeta}) \\
			& = \prod_{i=1}^n \big(1 + \gamma_{ik}(s,\boldsymbol{\zeta})\big)^{w_i} - 1
			- \sum_{i=1}^n w_i \gamma_{ik}(s,\boldsymbol{\zeta}).
		\end{align*}
		If $ \gamma_{k}(s,\boldsymbol{\zeta}) = \gamma_{1k}(s,\boldsymbol{\zeta}) = \gamma_{2k}(s,\boldsymbol{\zeta}) = \dots = \gamma_{nk}(s,\boldsymbol{\zeta}) $,
		then by $ \sum_{i=1}^n w_i = 1 $ defined before,
		\begin{align*}
			\exp\big(\boldsymbol{w}^\top \boldsymbol{f}_k(s,\boldsymbol{\zeta})\big) - 1
			- \boldsymbol{w}^\top \boldsymbol{\gamma}_{k}(s,\boldsymbol{\zeta})
			= \Big(1 + \gamma_{k}(s,\boldsymbol{\zeta})\Big)^{\sum_{i=1}^n w_i} - 1
			- \gamma_{k}(s,\boldsymbol{\zeta}) \sum_{i=1}^n w_i
			= 0.
		\end{align*}
		Hence
		\begin{align*} 
			\mathbb{E} (\text{VaR}_\alpha) (t)
			{} = 
			& \exp \bigg( 
			\boldsymbol{w}^\top \Big[\boldsymbol{\mu} - \frac{1}{2} \boldsymbol{\sigma}\circ\boldsymbol{\sigma}\Big]t
			+ \sqrt{t}\big( \boldsymbol{w}^\top \boldsymbol{\Sigma} \boldsymbol{w} \big)^{\frac{1}{2}} \Phi^{-1}(1-\alpha)
			\bigg)
			\prod_{k=1}^m \exp\bigg( \int_0^t \int_\mathbb{R} 0\ \nu_k(d\zeta_k)ds \bigg) \nonumber \\
			{} = 
			& \exp \bigg( 
			\boldsymbol{w}^\top \Big[\boldsymbol{\mu} - \frac{1}{2} \boldsymbol{\sigma}\circ\boldsymbol{\sigma}\Big]t
			+ \sqrt{t}\big( \boldsymbol{w}^\top \boldsymbol{\Sigma} \boldsymbol{w} \big)^{\frac{1}{2}} \Phi^{-1}(1-\alpha)
			\bigg).
		\end{align*}
	\end{proof}

	\begin{corollary}
		Under Case 1 of Theorem \ref{Thm},
		\begin{align} \label{VaR-1}
			\mathbb{E} (\text{VaR}_\alpha)(t)
			=
			\exp \bigg( \Big[\mu - \frac{\sigma^2}{2} \Big]t + \sigma \Phi^{-1}(1-\alpha) \sqrt{t} \bigg).
		\end{align}
		Under Case 2 of Theorem \ref{Thm},
		\begin{align} \label{VaR-2}
			\mathbb{E} (\text{VaR}_\alpha) (t)
			= 
			\exp \bigg( 
			\boldsymbol{w}^\top \Big[\boldsymbol{\mu} - \frac{1}{2} \boldsymbol{\sigma}\circ\boldsymbol{\sigma}\Big]t
			+ \sqrt{t}\big( \boldsymbol{w}^\top \boldsymbol{\Sigma} \boldsymbol{w} \big)^{\frac{1}{2}} \Phi^{-1}(1-\alpha)
			\bigg).
		\end{align}
	\end{corollary}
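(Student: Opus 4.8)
The plan is to obtain both formulas as immediate consequences of the general estimate \eqref{VaR} together with the algebraic computations already carried out in the proof of Theorem \ref{Thm}. The only point to verify is that, in each of the two listed cases, the jump integrand
\[
\exp\big(\boldsymbol{w}^\top \boldsymbol{f}_k(s,\boldsymbol{\zeta})\big) - 1 - \boldsymbol{w}^\top \boldsymbol{\gamma}_{k}(s,\boldsymbol{\zeta})
\]
vanishes identically for every $k=1,\dots,m$, so that each factor $\exp\big(\int_0^t\int_\mathbb{R}[\cdots]\,\nu_k(d\zeta_k)\,ds\big)$ in \eqref{VaR} collapses to $1$, leaving only the deterministic exponential.

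For Case 1, I would set $n=1$ and $\boldsymbol{w}=(1)$, so that $\boldsymbol{w}^\top\boldsymbol{f}_k(s,\boldsymbol{\zeta})=\ln(1+\gamma_k(s,\boldsymbol{\zeta}))$ and $\boldsymbol{w}^\top\boldsymbol{\gamma}_k(s,\boldsymbol{\zeta})=\gamma_k(s,\boldsymbol{\zeta})$; then $\exp(\ln(1+\gamma_k))-1-\gamma_k=0$. In the remaining deterministic part one has $\boldsymbol{w}^\top\boldsymbol{\Sigma}\boldsymbol{w}=\sigma^2$ and $\boldsymbol{w}^\top\big[\boldsymbol{\mu}-\tfrac12\boldsymbol{\sigma}\circ\boldsymbol{\sigma}\big]=\mu-\sigma^2/2$, which yields \eqref{VaR-1} after simplifying $\sqrt{t}\,(\boldsymbol{w}^\top\boldsymbol{\Sigma}\boldsymbol{w})^{1/2}=\sigma\sqrt{t}$.

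For Case 2, I would use $\sum_{i=1}^n w_i=1$ together with $\gamma_{1k}=\cdots=\gamma_{nk}=\gamma_k$ to compute $\boldsymbol{w}^\top\boldsymbol{f}_k=\sum_{i=1}^n w_i\ln(1+\gamma_k)=\ln(1+\gamma_k)$ and $\boldsymbol{w}^\top\boldsymbol{\gamma}_k=\sum_{i=1}^n w_i\gamma_k=\gamma_k$, so that again the integrand is $(1+\gamma_k)-1-\gamma_k=0$. Substituting this into \eqref{VaR} gives \eqref{VaR-2}. There is essentially no obstacle here, since the statement is a direct specialization; the only point requiring a little care is that the identity $\prod_{i=1}^n(1+\gamma_{ik})^{w_i}=1+\sum_{i=1}^n w_i\gamma_{ik}$ used to kill the integrand genuinely relies on both $\sum_{i=1}^n w_i=1$ and the equality of the $\gamma_{ik}$ across $i$ (in Case 1 this degenerates to a single term and holds trivially).
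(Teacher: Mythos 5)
Your proposal is correct and follows essentially the same route as the paper: the corollary is obtained by specializing the general formula \eqref{VaR}, and the paper's proof of Theorem \ref{Thm} performs exactly the computations you describe — showing $\exp\big(\boldsymbol{w}^\top \boldsymbol{f}_k\big) - 1 - \boldsymbol{w}^\top \boldsymbol{\gamma}_{k} = 0$ in Case 1 by direct substitution and in Case 2 via $\prod_{i=1}^n(1+\gamma_k)^{w_i} = (1+\gamma_k)^{\sum_i w_i}$ with $\sum_i w_i = 1$. Your closing remark about which hypotheses the cancellation genuinely requires matches the paper's reasoning.
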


	\begin{remark}

		Although there is not jump term in formulas \eqref{VaR-1} and \eqref{VaR-2}, the sample mean vector of return $ \boldsymbol{\mu} $, sample standard deviation vector of return $ \boldsymbol{\sigma} $, and sample covariance matrix of return $ \boldsymbol{\Sigma} $ are calculated after removing all the jumps from asset prices.  In other words, the estimations \eqref{VaR-1} and \eqref{VaR-2}, as well as \eqref{VaR}, are two-step approaches. Algorithm of the first step of the approaches is set up immediately after this remark.  
		
		As a comparison, in the model derived without jump process \cite{FSU}, the corresponding $ \boldsymbol{\mu}^* $, $ \boldsymbol{\sigma}^* $, and $ \boldsymbol{\Sigma}^* $ are calculated from the raw data. The value-at-risk without jumps ($ \text{VaR}^* $) is 
		\begin{align} \label{VaR-3}
			\text{VaR}_\alpha^* (t)
			= 
			\exp \bigg( 
			\boldsymbol{w}^\top \Big[\boldsymbol{\mu}^* - \frac{1}{2} \boldsymbol{\sigma}^*\circ\boldsymbol{\sigma}^*\Big]t
			+ \sqrt{t}\big( \boldsymbol{w}^\top \boldsymbol{\Sigma}^* \boldsymbol{w} \big)^{\frac{1}{2}} \Phi^{-1}(1-\alpha)
			\bigg).
		\end{align}

	\end{remark}
	
	We set up the algorithm of removing jumps in the following. We define the jump days by the change-points of time series in statistics.
Using appropriate statistics method, we first detect the  jump days and denote them as $ \{\text{jpt}_1, \text{jpt}_2, \dots, \text{jpt}_m\} $.
	Then we calculate the jump size $ J_k \in \mathbb{R} $ of the $ k $-th jump day of a single instrument (asset), $ k = 1, \dots, m $, by
	\begin{align} \label{Jump}
		J_k = S(\text{jpt}_k) - S(\text{jpt}_k - 1),
	\end{align}
	where $ S(\cdot) $ is the asset closing price and $ \text{jpt}_k $ is the $ k $-th jump day.
	After that, we define the cumulative jump size $ CJ \in \mathbb{R} $ of asset closing price by
	\begin{align} \label{CJ}
		CJ_k = \sum_{l=1}^k J_l.
	\end{align}
	We finally remove all the jumps by the following:
	\begin{align} \label{Remove}
		\Big( 
		& S(0), S(1), \dots, S(\text{jpt}_1 - 1), \nonumber \\
		& \big(S(\text{jpt}_1) - CJ_1\big), \big(S(\text{jpt}_1+1) - CJ_1\big), \dots, \big(S(\text{jpt}_2 - 1) - CJ_1\big), \nonumber \\
		& \big(S(\text{jpt}_2) - CJ_2\big), \big(S(\text{jpt}_2+1) - CJ_2\big), \dots, \big(S(\text{jpt}_3 - 1) - CJ_2\big), \nonumber \\
		& \cdots \cdots \cdots \cdots \nonumber \\
		& \big(S(\text{jpt}_{m-1}) - CJ_{m-1}\big), \big(S(\text{jpt}_{m-1}+1) - CJ_{m-1}\big), \dots, \big(S(\text{jpt}_m - 1) - CJ_{m-1}\big), \nonumber \\
		& \big(S(\text{jpt}_m) - CJ_m\big), \big(S(\text{jpt}_m+1) - CJ_m\big), \dots, \big(S(T) - CJ_m\big)
		\Big)
	\end{align}
	where $ T $ is terminal time.

	\section{Application of Value-at-Risk for corn and soybean}
\label{sec3}
	\subsection{Data description}
	The daily price data for export corn and soybean prices were taken from Eikon Refinitiv of Thompson Reuters for the period from 2016-09-26 to 2023-09-14.
	Missing values are estimated by linear interpolation.
	The export market locations for corn were the US Gulf (USG), US Pacific Northwest (PNW), Ukraine, Brazil, and Argentina.
	For soybean, the locations included USG, PNW Brazil, and Argentina.
	For each of these commodities, the origins are dominant in the international market.
	The time period was one that had some heightened volatility compared to history, which makes it appropriate for our analysis. Important, this period included the following events:  COVID-19, inflation, rapid increases in oil prices, a North American drought, the emergence of Renewable diesel, Post-COVID expansion, labor shortages constraining logistical functions, other supply chain problems (congestion etc.), all of which were followed by the Russian invasion of Ukraine and the creation of and demise of the Black Sea Grain Corridor.
	Taken together, the cumulative impact of these was for a historically rapid escalation in prices, and volatility, followed by a tempered decrease in prices and volatility.
	\begin{figure}[H]
		\centering
		\begin{minipage}{\textwidth}
			\centering
			\includegraphics[width=0.7\textwidth]{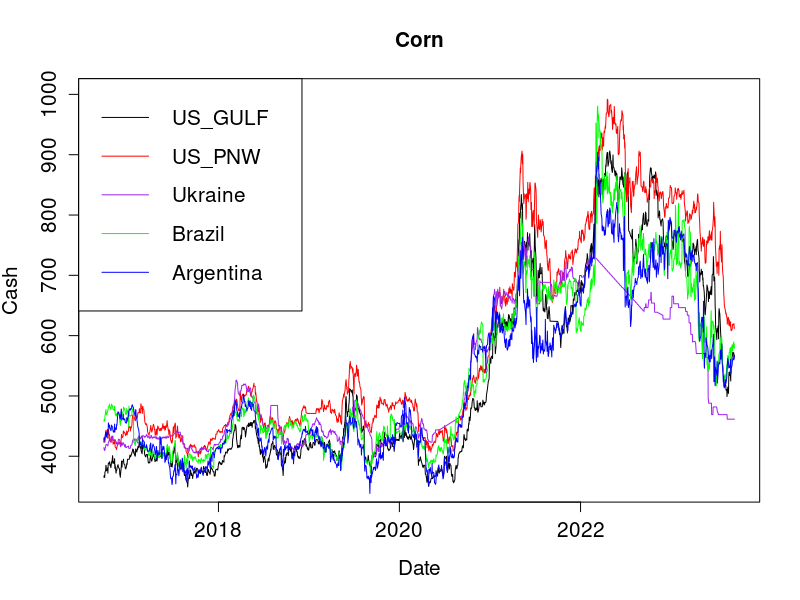}
			\caption{Daily data of cash  corn at the USG, PNW, Ukraine, Brazil, and Argentina.}
			\label{Cash_c}
		\end{minipage}
	\end{figure}
	\begin{figure}[H]
		\begin{minipage}{\textwidth}
			\centering
			\includegraphics[width=0.7\textwidth]{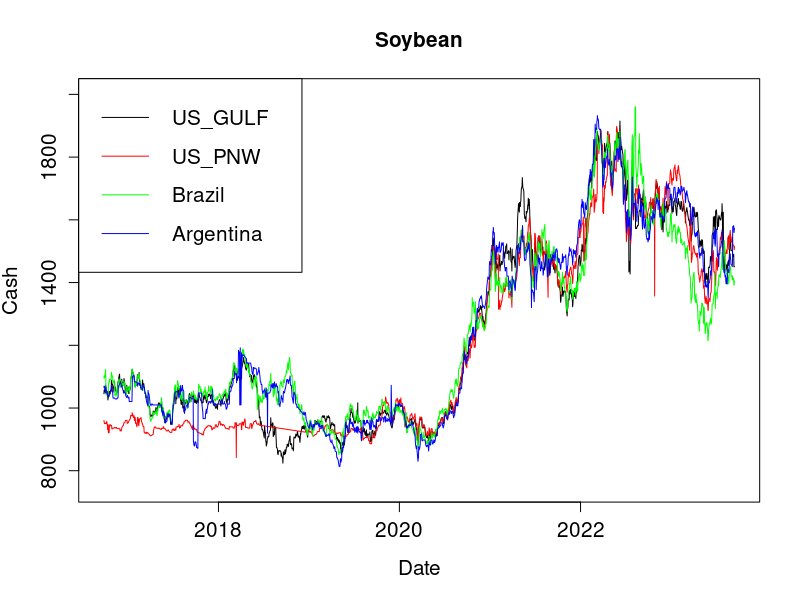}
			\caption{Daily data of cash soybean at the USG, PNW, Brazil, and Argentina.}
			\label{Cash_s}
		\end{minipage}
	\end{figure}
	\noindent 
	The multiple asset portfolio is comprised of two sets of cash assets representing the market at different locations. 
	The data is plotted in Figure \ref{Cash_c} and Figure \ref{Cash_s}.
	The boxplots of daily returns for all the instruments are plotted in Figure \ref{box_c} and Figure \ref{box_s}.
	Meanwhile, the corresponding histograms are plotted in Figure \ref{hist_c} and Figure \ref{hist_s} in Appendix.
	In the boxplots, the outliers illustrate the significant jumps of daily return.
	The abnormality of the jumps of daily return is caused by the occurrence of major events, such as Russian invasion, 
	especially for corn in Brazil and Argentina, and for soybean in PNW, Brazil, and Argentina.
	\begin{figure}[H] 
		\begin{minipage}{\textwidth}
			\centering
			\includegraphics[width=0.6\textwidth]{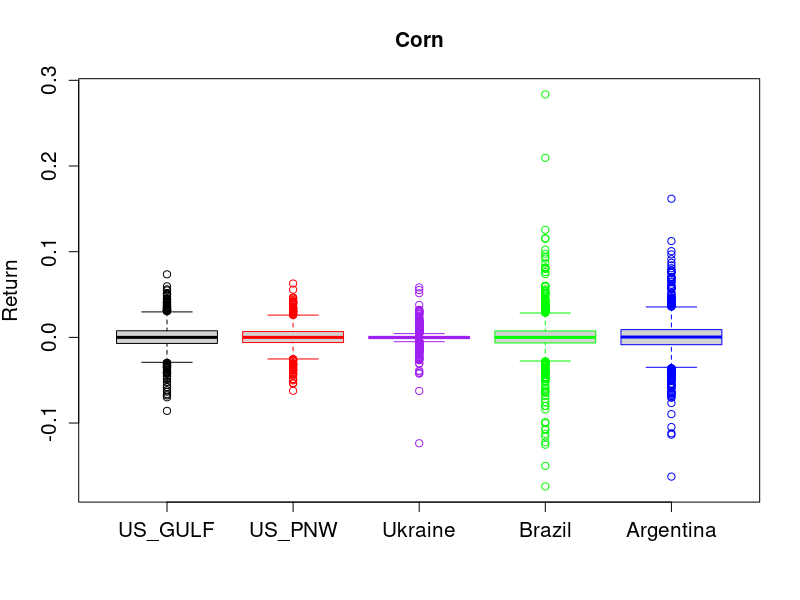}
			\caption{Boxplots of daily return of corn at the USG, PNW, Ukraine, Brazil, and Argentina.	}
			\label{box_c}
		\end{minipage}
	\end{figure}
	
	\begin{figure}[H] 
		\begin{minipage}{\textwidth}
			\centering
			\includegraphics[width=0.6\textwidth]{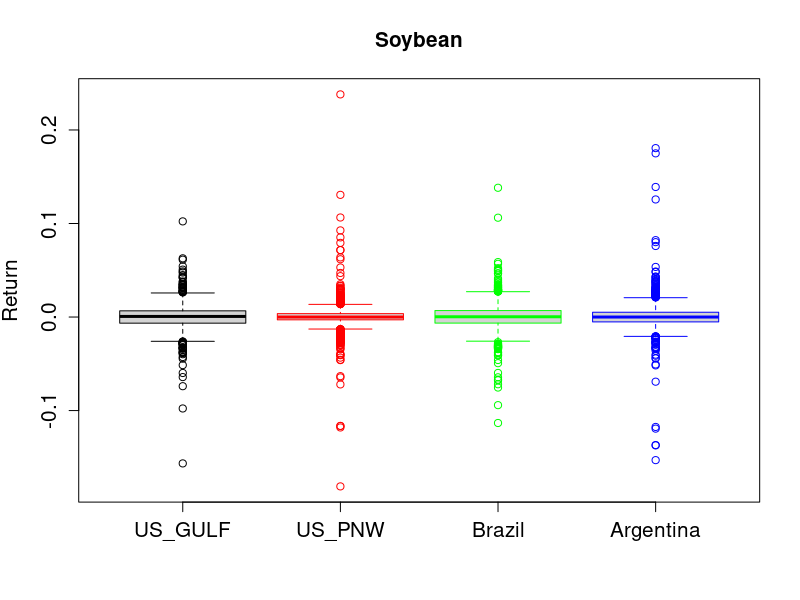}
			\caption{Boxplots of daily return of soybean at the USG, PNW, Brazil, and Argentina.}
			\label{box_s}
		\end{minipage}
	\end{figure}
	
	\subsection{A portfolio with single cash position}
	In this research, we define that jump day is the change-point identified in mean for a time series data.
	For each instrument, a jump of daily return comes from the corresponding jump of daily cash data by definition of return.
	Thus, we detect the jumps of daily cash data instead of that of daily returns.
	
	We firstly observe that this daily cash dataset is unusual, because the differences between adjacent data points are so large that almost every data point will be detected as a jump day.
	We then do a normalization to these differences.
	Next, we detect the jump days using the function "cpt.mean" \cite{Hinkley} with the method "PELT" \cite{PELT} from the package "changepoint" \cite{changepoint_1}\cite{changepoint_2} of R \cite{R}.
	Here, we note that this R package outputs the date which is one day before the jump day. 
	If we denote $ \{\text{cpt}_1, \text{cpt}_2, \dots, \text{cpt}_m\} $ as the output dates, then the jump days denoted by $ \{\text{jpt}_1, \text{jpt}_2, \dots, \text{jpt}_m\} $ are illustrated in the sequence of time $ t $ as below
	\begin{align*}
		\big( \dots, \text{cpt}_1, \text{jpt}_1, \dots, \text{cpt}_2, \text{jpt}_2, \dots \dots, \text{cpt}_m, \text{jpt}_m, \dots \big). 
	\end{align*}
	Based on the detected jump days, we let the value of the date with no-jump be $ 0 $ and let the value of the date with jump be $ 1 $, i.e. we convert the cash data into a categorical data.
	After that, we calculate the correlations of the jumps for all pairs of markets for both corn and soybean.
	Results are shown in Figure \ref{Cor}.
	\begin{figure}[H]
		\centering
		\begin{minipage}{\textwidth}
			\centering
			\includegraphics[width=0.5\textwidth]{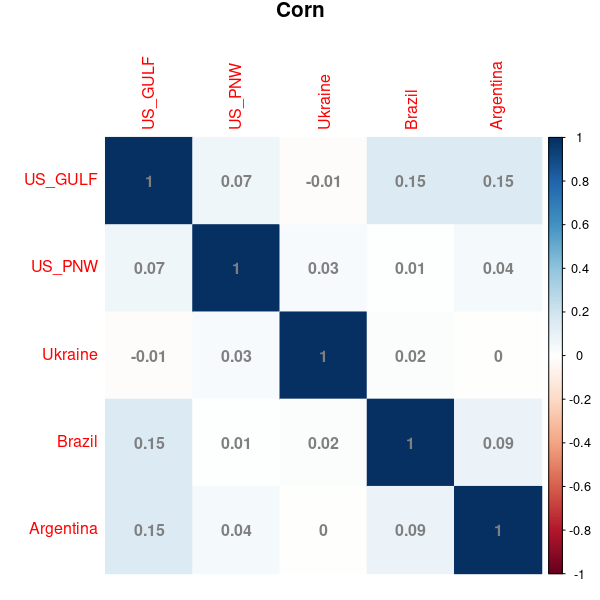}
			\includegraphics[width=0.492\textwidth]{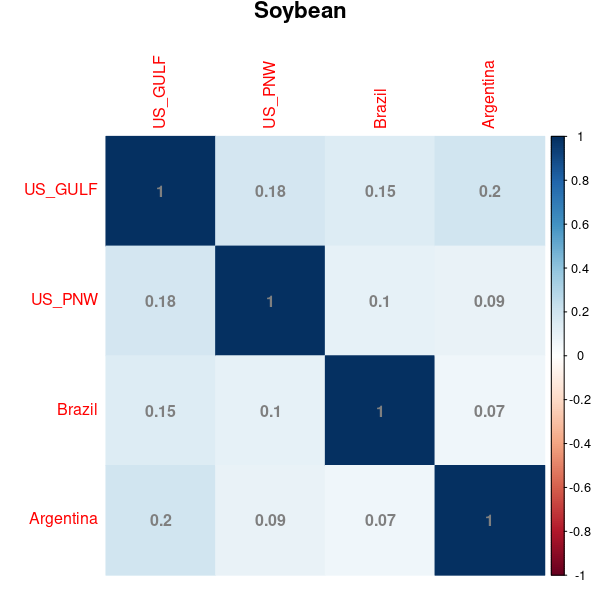}
			\caption{Correlations of the jumps of corn's and soybean's daily cash data for all pairs of markets.}
			\label{Cor}
		\end{minipage}
	\end{figure}
	In Figure \ref{Cor}, it is clear that all the correlations of jumps between different markets are close to $ 0 $. We note that, although there are two cases of correlation in corn having $ 0.15 $ and three cases of correlation in soybean having from $ 0.15 $ to $ 0.2 $, all of these values are still too low to be considered as close to $ 0 $. It implies the occurrence of the jump of closing price in one market is uncorrelated to that in a different market for both corn and soybean. Hence we only need to consider the case of a portfolio with single cash position, i.e. single asset portfolio.

	\subsection{Correction for the overestimation and underestimation of VaR}\label{sec3.3}

	In the case of single asset portfolio, we calculate $ \mathbb{E}(\text{VaR}_{0.99})(t) $ using formula \eqref{VaR-1}.
	Here we set the confidence level $ \alpha = 0.99$ to calculate the quantile $ \Phi^{-1}(1-\alpha) $.
	The time $ t $ is the total number of trading days from 2016-09-27 to the date that one wants to calculate its VaR.
	The sample mean of return $ \mu $ and the sample standard deviation of return $ \sigma $ are calculated by historical daily cash data over the time period defined in $ t $, after removing all the jumps of daily cash data.

	For comparison, we also calculate $ \text{VaR}_{0.99}^*(t) $ using formula \eqref{VaR-3} from the raw data. 
	The averages of $ \mathbb{E}(\text{VaR}_{0.99})(t) $ and that of $ \text{VaR}_{0.99}^*(t) $ with respect to $ t $, where $ t $ is from 2016-09-28 to 2023-09-14, are calculated as their expected values and summarized in Table \ref{VAR_C} and Table \ref{VAR_S}.
	\begin{table}[H]
		\caption{Expectation of daily VaR with jumps $ \mathbb{E}\big(\mathbb{E}(\text{VaR}_{0.99})(t)\big) $ and that without jumps $ \mathbb{E}\big(\text{VaR}_{0.99}^*(t)\big) $ of corn at the USG, PNW, Ukraine, Brazil, and Argentina.}
		\begin{center}
			\begin{tabular}{l c c c c c}
				\hline 
				Market & US GULF & US PNW & Ukraine & Brazil & Argentina \\
				\hline 
				$ \mathbb{E}\big(\mathbb{E}(\text{VaR}_{0.99})(t)\big) $ & $ 0.6996 $ & $ 0.6823 $ & $ 0.8136 $ & $ 0.6384$ & $ 0.4191 $  \\
				$ \mathbb{E}\big(\text{VaR}_{0.99}^*(t)\big) $ & $ 0.5896 $ & $ 0.6514 $ & $ 0.7974 $ & $ 0.4123 $ & $ 0.4027 $ \\
				\hline 
				\label{VAR_C}
			\end{tabular}
		\end{center}
	\end{table}
	\begin{table}[H]
		\caption{Expectation of daily VaR with jumps $ \mathbb{E}\big(\mathbb{E}(\text{VaR}_{0.99})(t)\big) $ and that without jumps $ \mathbb{E}\big(\text{VaR}_{0.99}^*(t)\big) $ of soybean at the USG, PNW, Brazil, and Argentina.}
		\begin{center}
			\begin{tabular}{l c c c c}
				\hline 
				Market & US GULF & US PNW & Brazil & Argentina \\
				\hline 
				$ \mathbb{E}\big(\mathbb{E}(\text{VaR}_{0.99})(t)\big) $ & $ 0.6299 $ & $ 0.7059 $ & $ 0.6051 $ & $ 0.6663 $  \\
				$ \mathbb{E}\big(\text{VaR}_{0.99}^*(t)\big) $ & $ 0.5655 $ & $0.6460 $ & $ 0.5096 $ & $ 0.4162 $ \\
				\hline
				\label{VAR_S}
			\end{tabular}
		\end{center}
	\end{table}
	In Table \ref{VAR_C} and Table \ref{VAR_S}, the magnitude of $ \mathbb{E}\big(\mathbb{E}(\text{VaR}_{0.99})(t)\big) $ and $ \mathbb{E}\big(\text{VaR}_{0.99}^*(t)\big) $ measures the risk in single asset portfolio.
	For example, the average risk of corn at {Argentina} is the lowest and that at {Ukraine} is the greatest.
	The daily VaR with and without jumps generally always increases from {2020} to end of data, in Figure \ref{VAR_DD}, Figure \ref{Arg_var}, Figure \ref{VAR_CC}, and Figure \ref{VAR_SS}. 
	It is likely due to the multitude of factors causing increased volatility, notably {COVID-19} etc. (as discussed above).  
	Notably, VaRs {generally decrease} prior to 2020, then grew substantially reflecting heightened volatility, and in fact peaked out following {the Russian invasion of Ukraine in 2022}.

	We observe that all the expectations of daily VaR with jumps are {larger} than those without jumps.
	This implies the average risk in single asset portfolio is not as {low} as previously estimated, after considering the jumps.
	In other words, our model corrects the overall {under}estimation of VaR whereas the model derived without jump process {under}estimates it in average.
	This correction is also observed in the comparisons of daily VaR with and without jumps in Figure \ref{VAR_DD}, as well as Figure \ref{VAR_CC} and Figure \ref{VAR_SS} in Appendix. 
	Usually, the red color lines representing the daily VaR with jumps are above the corresponding black color lines representing the daily VaR without jump. For instance, in Figure \ref{VAR_DD}, the difference between daily VaR with and without jumps are significant.
	In addition to it, the corresponding root-mean-square errors (RMSE) are $ 0.2440$ and $0.2711$ respectively, see Table \ref{RMSE} in Appendix.
	Comparing with the maximum range of VaR (0.8) in Figure \ref{VAR_DD}, the RMSEs indicate that there are significant differences between models with and without jumps. 
	
	\begin{figure}[H]
		\begin{minipage}{\textwidth}
			\centering
			\includegraphics[width=0.496\textwidth]{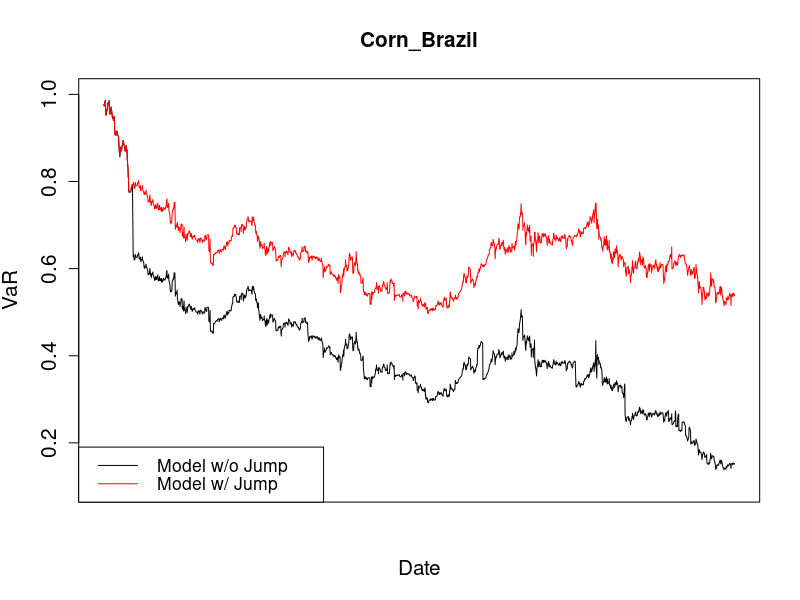}
			\includegraphics[width=0.496\textwidth]{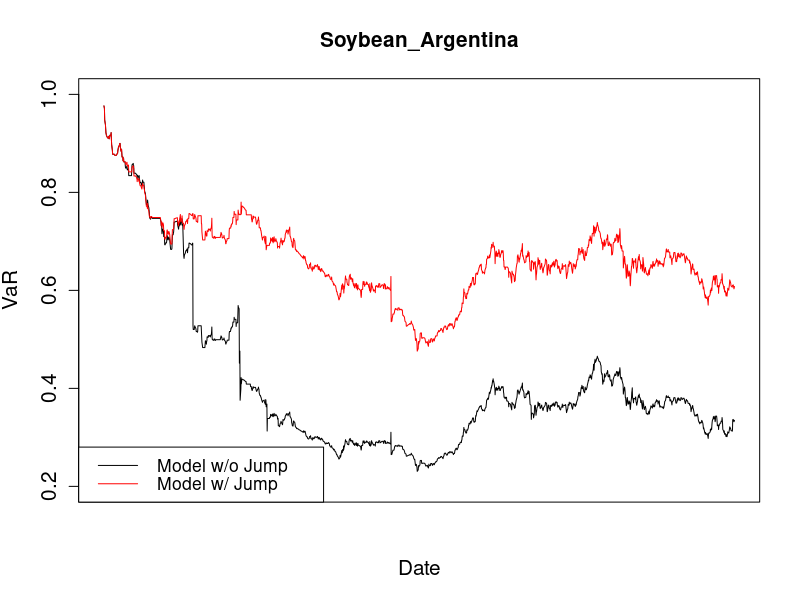}
			\caption{Daily VaR with and without jumps of corn at Brazil and those of soybean at Argentina.}
			\label{VAR_DD}
		\end{minipage}
	\end{figure}

	We perform backtesting \cite{backtest} to evaluate the performance of the VaR estimates $ \mathbb{E}(\text{VaR}_{0.99})(t) $ calculated with our model with jumps. We use the common backtesting method Dynamic Quantile (DQ) test \cite{Engle} to evaluate it for different markets.
	We then choose the function ``BacktestVaR" from the package ``GAS" (see \cite{GAS}) to perform the DQ test.
	At the significance level $ 1-\alpha = 0.01 $, the corresponding $ p $-values for the daily VaR estimates with jumps for both corn and soybean at the USG, PNW, Ukraine, Brazil, and Argentina are summarized in Table \ref{PValue}.
	All $ p $-values are larger than $ 0.01 $.
	This indicates that we fail to reject the null hypothesis which states that the current VaR violations are uncorrelated
	with past violations, at this level of significance.
	In other words, our VaR model with jumps is a correct model.

\begin{table}[H]
	\caption{DQ test $ p $-values for daily VaR estimates with jumps $ \mathbb{E}(\text{VaR}_{{0.99}})(t) $ for both corn and soybean at the USG, PNW, Ukraine, Brazil, and Argentina, at significance level 0.01.} 
	\begin{center}
		\begin{tabular}{l c c c c c}
			\hline 
			Market & US GULF & US PNW & Ukraine & Brazil & Argentina \\
			\hline 
			$ p $-value (corn) & $ 0.9334 $ & $ 0.9198 $ & $ 0.8507 $ & $ 0.8995 $ & $ 0.8941 $  \\
			$ p $-value (soybean) & $ 0.8818 $ & $ 0.8927 $ & N.A. & $ 0.8660 $ & $0.8755  $ \\
			\hline 
			\label{PValue}
		\end{tabular}
	\end{center}
\end{table}

	\begin{figure}[H]
		\begin{minipage}{\textwidth}
			\centering
			\includegraphics[width=0.6\textwidth]{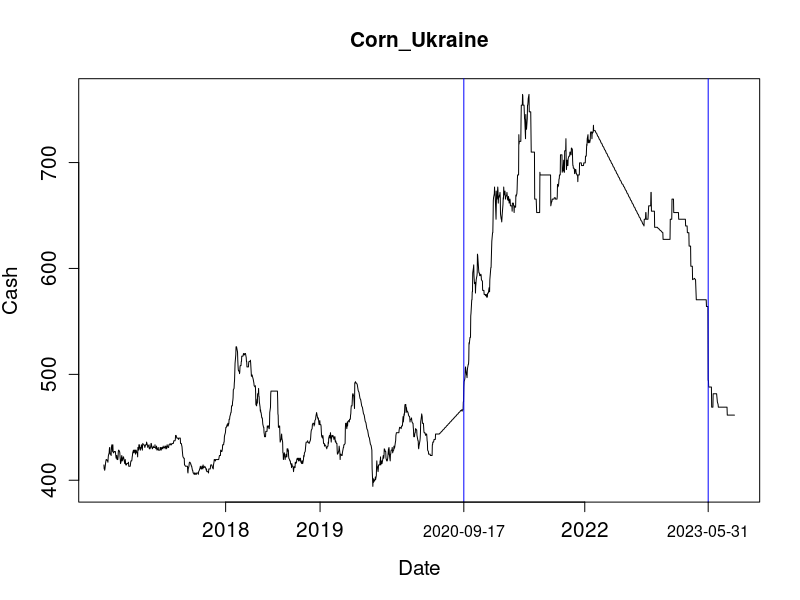}
			\caption{Daily data of cash  corn at {Ukraine}:   from 2016-09-26 to 2023-09-14.}
			\label{Arg_cash}
		\end{minipage}
	\end{figure}
	
	\begin{figure}[H]
		\begin{minipage}{\textwidth}
			\centering
			\includegraphics[width=0.6\textwidth]{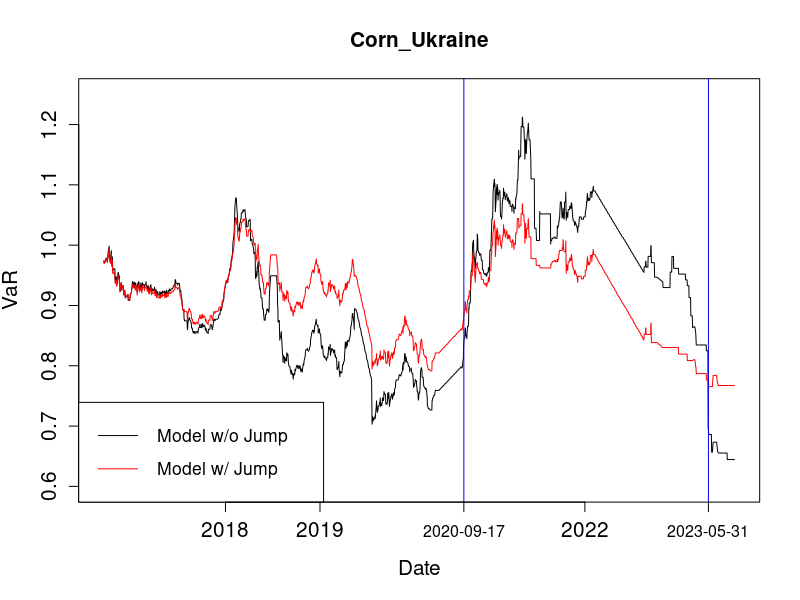}
			\caption{Daily VaR with and without jumps of corn at {Ukraine}: from 2016-09-28 to 2023-09-14.}
			\label{Arg_var}
		\end{minipage}
	\end{figure}
	Sometimes, our model estimates unusual but more reasonable VaRs which are {less} than the corresponding VaRs without jumps, after a {sharp rise} of asset closing price. For example, in Figure \ref{Arg_cash}, the closing price of corn at Ukraine rised sharply on 2020-09-17. It is well known that high-priced stocks are less risky \cite{HPLR1} \cite{HPLR2} \cite{HPLR3} \cite{HPLR4} \cite{HPLR5}.
	Hence, Figure \ref{Arg_cash} illustrates that this asset is less risky between 2020-09-17 and 2023-05-31. 	Meanwhile, in Figure \ref{Arg_var}, the VaR with jumps is generally below the corresponding VaR without jumps between 2020-09-17 and 2023-05-31.
	This implies that the VaR with jumps reveals relatively low risks for the sharp rise of asset price.  By comparison between Figure \ref{Arg_cash} and Figure \ref{Arg_var}, our model \eqref{VaR-1} corrects some overestimation of VaR without jumps.

	\subsection{Corrections by cumulative jump size}
	Our model \eqref{VaR-1} does not always estimate {smaller} daily VaRs after every {sharp rise} of asset price. Because calculation of VaR with jumps is based on the cumulative jump size defined by \eqref{CJ}, but not the jump size \eqref{Jump}.
	\begin{figure}[H]
		\begin{minipage}{\textwidth}
			\centering
			\includegraphics[width=0.6\textwidth]{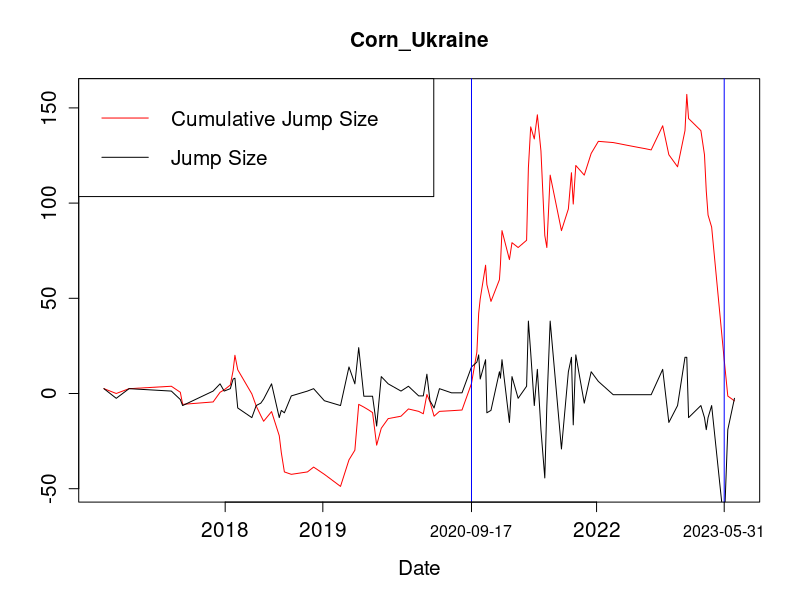}
			\caption{Jump sizes and cumulative jump sizes of cash  corn at {Ukraine}:   from 2016-09-27 to 2023-09-14.}
			\label{Arg_jump}
		\end{minipage}
	\end{figure}

\begin{table}[H]
	\caption{Part of cumulative jump sizes of cash value of corn at Ukraine.}
	\begin{center}
		\begin{tabular}{c r}
			\hline 
			Jump Day &  Cumulative Jump Size \\
			\hline
			$ \vdots $ & $ \vdots $ \\
			{2020-08-11} & $ {-8.6784} $ \\
			{2020-09-17} & $ {5.2647} $ \\
			{2020-10-08} & $ {21.7428} $ \\
			$ \vdots $ & $ \vdots $ \\
			{2023-04-12} & $ {87.3808} $ \\
			{2023-05-31} & $ {17.6655} $ \\
			{2023-06-14} & $ {-1.3477} $ \\
			$ \vdots $ & $ \vdots $ \\
			\hline
			\label{Arg_CJ}
		\end{tabular}
	\end{center}
\end{table}
	
For example, as observed in Figure \ref{Arg_jump}, the jump size of cash value of corn at Ukraine revolves around $ 0 $. 	But, its cumulative jump size increased significantly above zero on 2020-09-17 (see Table \ref{Arg_CJ}), then kept positive until 2023-05-31. By subtracting cumulative jump sizes in \eqref{Remove}, our model estimates smaller daily VaRs than the model derived without jump process, see Figure \ref{Arg_var}.
	
In contrast, the cumulative jump size decreased below zero on 2023-06-14 (see Table \ref{Arg_CJ}), then kept negative, in Figure \ref{Arg_jump}. 	By subtracting cumulative jump sizes in \eqref{Remove}, our model estimates larger daily VaRs, see Figure \ref{Arg_var}.

If the cumulative jump sizes are relatively close to zero in Figure \ref{Arg_jump}, then the VaR with and without jumps almost overlap in Figure \ref{Arg_var}. 	This implies that there is not a significant difference between the model with and without jumps.

\begin{remark}\label{remark3.1}
	In the case of single asset portfolio, 
	\begin{enumerate}
		\item if the cumulative jump size has a large positive value, the daily VaR estimated by \eqref{VaR-1} corrects the overestimation of the corresponding VaR from the model without jumps;
		\item if the cumulative jump size is negative and has a large absolute value, the daily VaR estimated by \eqref{VaR-1} corrects the underestimation of the corresponding VaR from the model without jumps;
		\item if the cumulative jump size is relatively close to $ 0 $, the above corrections are not significant.
	\end{enumerate}
\end{remark}

\begin{table}[H]
	\caption{The mean of cumulative jump sizes (MCJ) and the difference of the expectations of VaR between models with and without jumps ($ \Delta $) for corn and soybean at the USG, PNW, Ukraine, Brazil, and Argentina.}
	\begin{center}
		\begin{tabular}{l c c c c c}
			\hline 
			Market & US GULF & US PNW & Ukraine & Brazil & Argentina \\
			\hline 
			MCJ (corn) & $ -33.3239 $ & $ 8.2756 $ & $ 35.3055 $ & $ -113.2407 $ & $ 15.7472 $  \\
			$ \Delta $ (corn) & $ 0.1100 $ & $ 0.0309 $ & $ 0.0162 $ & $ 0.2261 $ & $ 0.0164 $  \\
			MCJ (soybean) & $ -44.4381 $ & $ 10.5407 $ & N.A. & $ -89.5808 $ & $ -81.0837 $ \\
			$ \Delta $ (soybean) & $ 0.0644 $ & $ 0.0599 $ & N.A. & $ 0.0955 $ & $ 0.2501 $ \\
			\hline 
			\label{VAR_D}
		\end{tabular}
	\end{center}
\end{table}

In Section \ref{sec3.3}, we have already observed that all the expectations of VaR with jumps are larger than those without jumps.
To explain this, we define the difference of the expectations of VaR between models with and without jumps by $ \Delta = \mathbb{E}\big(\mathbb{E}(\text{VaR}_{0.99})(t)\big) - \mathbb{E}\big(\text{VaR}_{0.99}^*(t)\big) $. If $ \Delta >0 $, then majority of VaR estimates from our model with jumps are larger than the corresponding VaR estimates from the model without jumps, i.e., the expectation of VaR estimated by \eqref{VaR-1} corrects the underestimation of the VaR without jumps.
After that, we calculate the mean of cumulative jump sizes (MCJ) and $ \Delta $ for both corn and soybean at the USG, PNW, Ukraine, Brazil, and Argentina.
In Table \ref{VAR_D}, we observe that $ \Delta $ for Brazil and Argentina are the largest in the market of corn and soybean, respectively. 
Meanwhile, the associated MCJs are negative and have large absolute values.   
For the remaining markets, the absolute values of MCJ are small and the associated $ \Delta $s are close to zero.
But we can observe that all the MCJ and $ \Delta $ are negatively correlated which is also supported by their correlation coefficient $ \rho(\text{MCJ}, \Delta) = -0.84 $.
Clearly, $ \Delta $ is affected by MCJ. 
Hence, similar to Remark \ref{remark3.1}, the strong negative correlation between MCJ and $ \Delta $ shows the following:

\begin{remark}
	In the case of single asset portfolio, 
	\begin{enumerate}
		\item if the mean of cumulative jump size is negative and has a large absolute value, the expectation of VaR estimated by \eqref{VaR-1} corrects the underestimation of the corresponding VaR from the model without jumps;
		\item if the absolute value of the mean of cumulative jump size is relatively small, the correction is not significant.
	\end{enumerate}
\end{remark}
\noindent The absence of the case that the mean of cumulative jump size has a large positive value will be studied in future work.
This can be caused by some financial behavior which is beyond the scope of this paper.

	\section{Conclusion}
\label{sec4}

VaR is a significant financial metric that holds relevance for all types of businesses and investment choices, regardless of their scale or magnitude. In more precise terms, the notion of VaR pertains to the computation of the utmost monetary detriment that may be incurred during a given time frame. Considering the phenomenon of the jumps of cash asset price, we define a novel VaR in this paper. The general formula of the expectation of VaR for a multi-asset portfolio with multiple cash positions is derived.  If a portfolio contains a single asset, or the returns of all the assets in a portfolio follow the same process for each jump, then the jump terms in the general formula do not explicitly exist. Under these two special cases, the corresponding formulas of the expectation of VaR can be simplified from the general formula. To estimate VaR with jumps, the sample mean and sample standard deviation of return are calculated after removing all the jumps of asset price, whereas they are calculated from the raw data of cash asset in the model derived without jump process. In the case of single asset portfolio, it is shown that the proposed model corrects both overestimation and underestimation of the VaR without jumps by the cumulative jump size.

These results have important implications for risk measurement using VaR. In this case, the results show that the VaR increased substantially during the study period as a result or numerous factors causing informational uncertainty, resulting in considerable risks for firms in the commodity marketing businesses. While all VaR measures increased during this period, the results suggest that our model with jumps estimates more reasonable VaRs. This has important implications for managing risks, margins, and other forms of controls imposed to mitigate risk exposures.

	\appendix
	
	\section{Appendix: Proof of Theorem \ref{THM}}
	\label{app1koy}

	\begin{proof}
		We let $ g(s,\zeta) $ be a simple function, i.e., $$g(s,\zeta) 	= \sum_{i=1}^n c_i \mathbbm{1}_{\mathcal{A}_i\times \mathcal{B}_i}(s,\zeta) = \sum_{i=1}^n c_i \mathbbm{1}_{\mathcal{A}_i}(s) \mathbbm{1}_{\mathcal{B}_i}(\zeta),$$
		where each $ c_i \in \mathbb{R} $, $ \mathbbm{1} $ is indicator function, the $ \mathcal{A}_i $ are disjoint Borel subsets of $ [0,t] $, and the $ \mathcal{B}_i $ are disjoint Borel subsets of $ \mathbb{R}\setminus\{0\} $.
		We let $ \mathcal{A}_i = [s_i, s_i+\Delta s_i) $, $ s_i\in[0,t] $, such that $ \cup_{i=1}^n \mathcal{A}_i = [0,t] $.
		Then
		\begin{align*}
			& \mathbb{E} \bigg( \exp \bigg( \int_0^t \int_\mathbb{R} g(s,\zeta)  N(ds,d\zeta) \bigg) \bigg) \\
			& = \mathbb{E} \bigg( \exp \bigg( \sum_{i=1}^n c_i N(\mathcal{A}_i,\mathcal{B}_i) \bigg) \bigg)
			= \mathbb{E} \bigg( \prod_{i=1}^n \exp \big(  c_i N(\mathcal{A}_i,\mathcal{B}_i) \big) \bigg) \\
			& = \mathbb{E} \bigg( \prod_{i=1}^n \exp \Big(  c_i N\big([s_i, s_i+\Delta s_i),\mathcal{B}_i\big) \Big) \bigg) \\
			& = \mathbb{E} \bigg( \prod_{i=1}^n \exp \Big( c_i \big[ N(s_i+\Delta s_i,\mathcal{B}_i) - N(s_i,\mathcal{B}_i) \big] \Big) \bigg),\ \text{by Theorem 1.5(2) in \cite{Oksendal_2}}, \\
			& = \prod_{i=1}^n \mathbb{E} \bigg( \exp \Big( c_i \big[ N(s_i+\Delta s_i,\mathcal{B}_i) - N(s_i,\mathcal{B}_i) \big] \Big) \bigg),\ \text{by Theorem 2.3.5(2) in \cite{Applebaum}}.
		\end{align*}
		By Theorem 1.5(4) in \cite{Oksendal_2}, $ N(s_i,\mathcal{B}_i) $ is a Poisson process with intensity $ \nu(\mathcal{B}_i) $. It follows that 
		\begin{align*}
			& \prod_{i=1}^n \mathbb{E} \bigg( \exp \Big( c_i \big[ N(s_i+\Delta s_i,\mathcal{B}_i) - N(s_i,\mathcal{B}_i) \big] \Big) \bigg) \\
			& = \prod_{i=1}^n \prod_{k=0}^\infty \frac{\big[\nu(\mathcal{B}_i) \Delta s_i\big]^k}{k!} \exp\big(-\nu(\mathcal{B}_i)\Delta s_i\big) \cdot \exp(c_i k) \\
			& = \prod_{i=1}^n \exp\big(-\nu(\mathcal{B}_i)\Delta s_i\big)
			\prod_{k=0}^\infty \frac{\big[\exp(c_i) \nu(\mathcal{B}_i) \Delta s_i \big]^k}{k!} \\
			& = \prod_{i=1}^n \exp\big(-\nu(\mathcal{B}_i)\Delta s_i\big)
			\exp \big(\exp(c_i) \nu(\mathcal{B}_i) \Delta s_i \big) \\
			& = \prod_{i=1}^n \exp\Big( \big[\exp(c_i)-1\big] \nu(\mathcal{B}_i)\Delta s_i \Big)
			= \exp\bigg( \sum_{i=1}^n \big[\exp(c_i)-1\big] \nu(\mathcal{B}_i)\Delta s_i \bigg) \\
			& = \exp\bigg( \int_0^t \int_\mathbb{R} \Big[\exp\big(g(s,\zeta)\big)-1\Big] \nu(d\zeta)ds \bigg).
		\end{align*}
		Hence
		\begin{align*}
			\mathbb{E} \bigg( \exp \bigg( \int_0^t \int_\mathbb{R} g(s,\zeta)  N(ds,d\zeta) \bigg) \bigg)
			= \exp\bigg( \int_0^t \int_\mathbb{R} \Big[\exp\big(g(s,\zeta)\big)-1\Big] \nu(d\zeta)ds \bigg).
		\end{align*}
		
		For every $ f \in L_1 $, there is a sequence of simple functions $ (g_n)=(g_1, g_2, g_3,\dots) $ converging to $ f $ in $ L_1 $. 
		Then there is a subsequence of $ (g_n) $ that converges to $ f $ almost everywhere. 
		It follows that this subsequence is dominated by a simple function from $ (g_n) $.
		Using Lebesgue's dominated convergence theorem, we obtain the required result by passing to the limit along this subsequence in the above.
	\end{proof}
	
	\section{Appendix: Figures}
	\label{app2koy}
	
	\begin{figure}[H]
		\centering
		\begin{minipage}{\textwidth}
			\centering
			\includegraphics[width=1.0\textwidth]{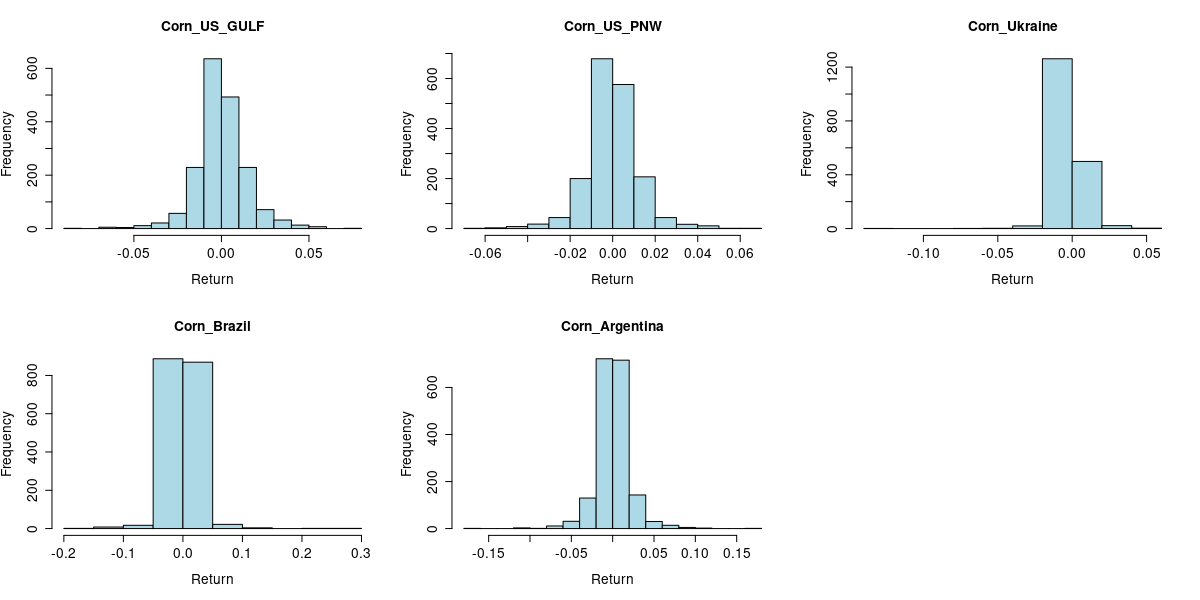}
			\caption{
			Histograms of daily return of corn at the USG, PNW, Ukraine, Brazil, and Argentina:   from 2016-09-27 to 2023-09-14.}
			\label{hist_c}
		\end{minipage}
	\end{figure}
	\begin{figure}[H] 
		\begin{minipage}{\textwidth}
			\centering
			\includegraphics[width=0.57\textwidth]{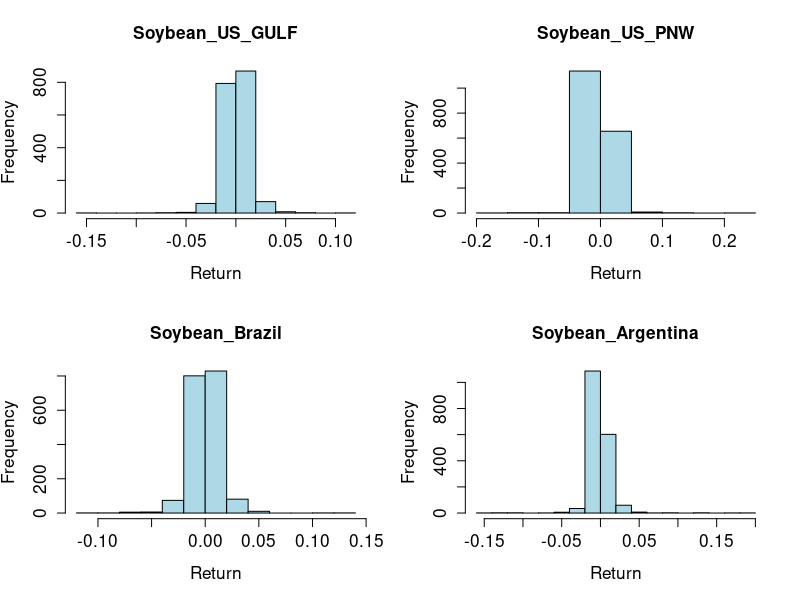}
			\caption{Histograms of daily return of soybean at the USG, PNW, Brazil, and Argentina:   from 2016-09-27 to 2023-09-14.}
			\label{hist_s}
		\end{minipage}
	\end{figure}
	
	\begin{figure}[H]
		\begin{minipage}{\textwidth}
			\centering
			\includegraphics[width=0.496\textwidth]{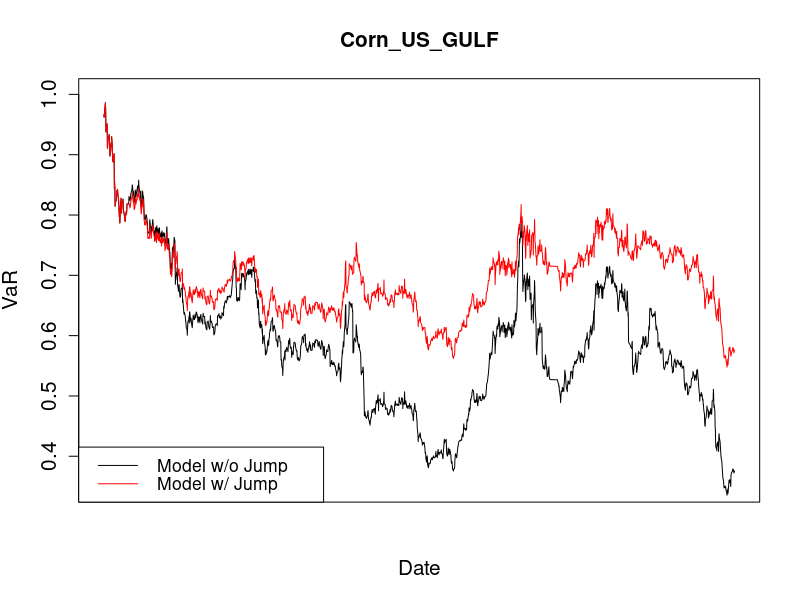}
			\includegraphics[width=0.496\textwidth]{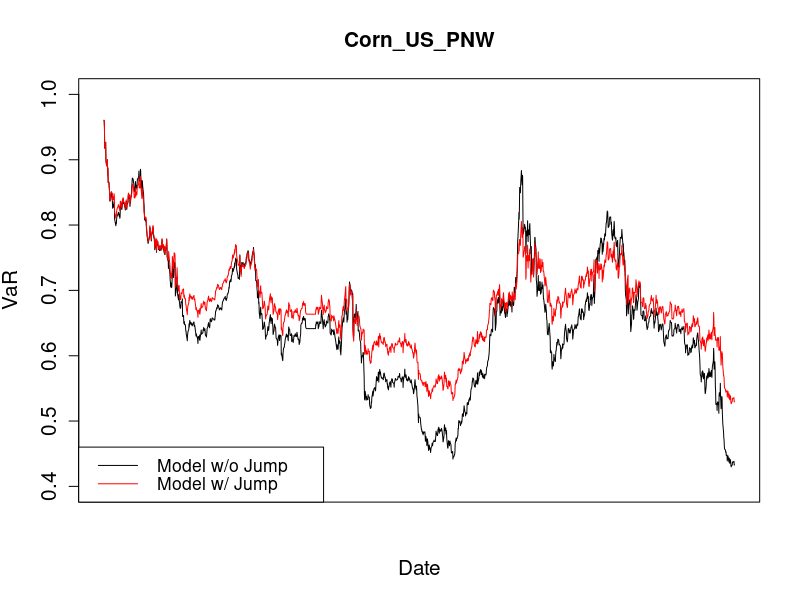} \\
			\includegraphics[width=0.496\textwidth]{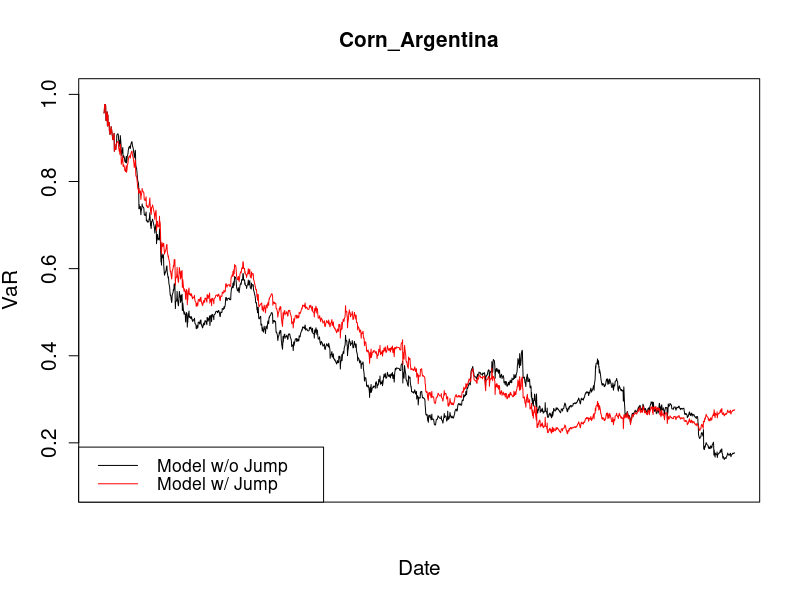}
			\caption{Daily VaR with and without jumps of corn at the USG, PNW, and {Argentina}: from 2016-09-28 to 2023-09-14.}
			\label{VAR_CC}
		\end{minipage}
	\end{figure}
	
	\begin{figure}[H]
		\begin{minipage}{\textwidth}
			\centering
			\includegraphics[width=0.496\textwidth]{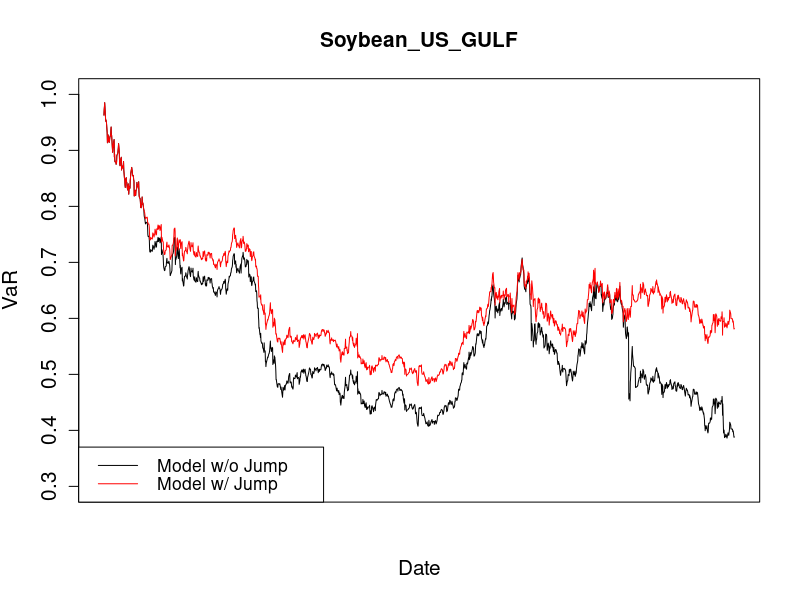}
			\includegraphics[width=0.496\textwidth]{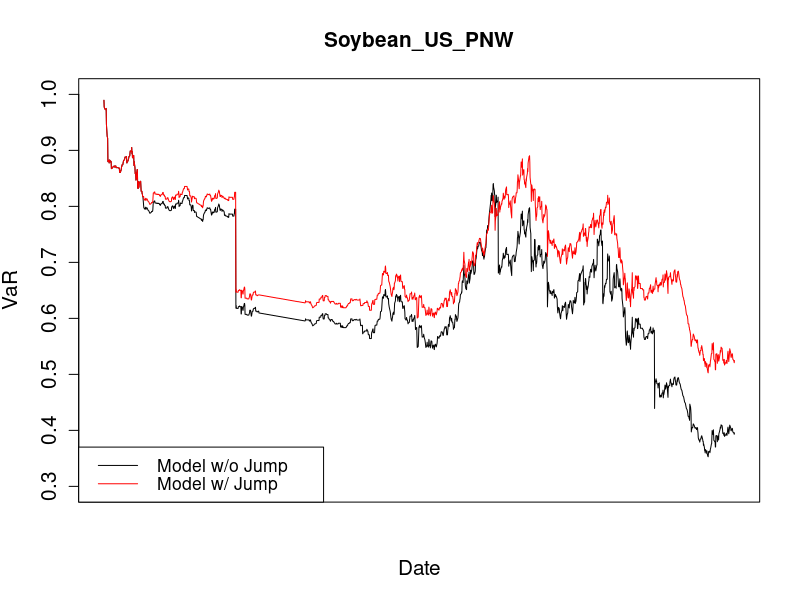} \\
			\includegraphics[width=0.496\textwidth]{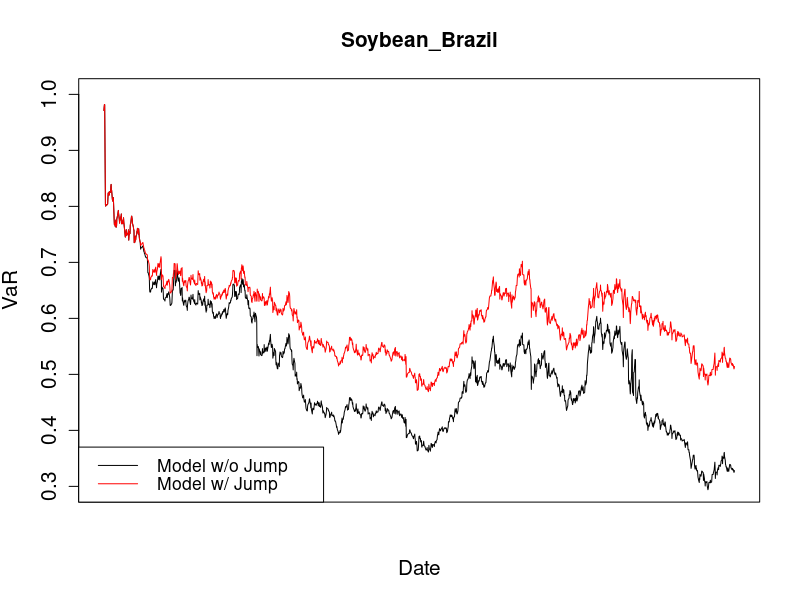}
			\caption{Daily VaR with and without jumps of soybean at the USG, PNW, and Brazil: from 2016-09-28 to 2023-09-14.}
			\label{VAR_SS}
		\end{minipage}
	\end{figure}

	\begin{table}[H]
		\caption{The root-mean-square error (RMSE) between daily VaR with and without jumps of corn and soybean at the USG, PNW, Ukraine, Brazil, and Argentina.}
		\begin{center}
			\begin{tabular}{l c c c c c}
				\hline 
				Market & US GULF & US PNW & Ukraine & Brazil & Argentina \\
				\hline 
				RMSE (corn) & $ 0.1315 $ & $ 0.0445 $ & $ 0.0635 $ & $ 0.2440 $ & $ 0.0487 $ \\
				RMSE (soybean) & $ 0.0805 $ & $ 0.0784 $ & N.A. & $ 0.1090$ & $ 0.2711$  \\
				\hline 
				\label{RMSE}
			\end{tabular}
		\end{center}
	\end{table}
	
\noindent \textbf{Acknowledgments}: We thank the anonymous referees for their constructive comments and suggestions, which helped us improve the manuscript.

\end{document}